\documentclass[journal]{IEEEtran}

\ifCLASSINFOpdf
\else
   \usepackage[dvips]{graphicx}
\fi
\usepackage{url}

\hyphenation{op-tical net-works semi-conduc-tor}
\usepackage{verbatim}

\usepackage{color}
\usepackage{subfigure}
\usepackage{amsmath,lipsum}

\usepackage{stfloats}

\usepackage{bm}

\usepackage{graphicx}

\usepackage{float}

\usepackage{amssymb}

\usepackage{tabularray}

\newtheorem{theorem}{Theorem}

\newtheorem{lemma}{Lemma}

\newtheorem{proposition}{Proposition}

\newtheorem{corollary}{Corollary}

\newtheorem{remark}{Remark}

\usepackage[noend]{algpseudocode}

\usepackage{algorithmicx,algorithm}

\newenvironment{proof}{{\indent \indent \it Proof:}}{\hfill $\blacksquare$\par}

\usepackage{tabularray}
\usepackage{caption}

\usepackage{etoolbox}  
\makeatletter
\patchcmd{\@makecaption}
  {\scshape}
  {}
  {}
  {}
\makeatother

\begin{document}
\title{Spatially Correlated RIS-Aided Secure Massive MIMO Under CSI and Hardware Imperfections}

\author{
Dan Yang, Jindan Xu, \IEEEmembership{Member, IEEE,} Wei Xu, \IEEEmembership{Senior Member, IEEE,} Bin Sheng, \IEEEmembership{Member, IEEE,}\\
Xiaohu You, \IEEEmembership{Fellow, IEEE,} Chau Yuen, \IEEEmembership{Fellow, IEEE,} and Marco Di Renzo, \IEEEmembership{Fellow, IEEE}

\thanks{
D. Yang, W. Xu, B. Sheng, and X. You are with the National Mobile Communications Research Lab, Southeast University, Nanjing 210096, China, and also with Purple Mountain Laboratories, Nanjing 211111, China (e-mail: \{dyang, wxu, sbdtt, xhyu\}@seu.edu.cn).

J. Xu and C. Yuen are with the School of Electrical and Electronics Engineering, Nanyang Technological University, Singapore 639798, Singapore (e-mail: jindan.xu@ntu.edu.sg, chau.yuen@ntu.edu.sg).

M. Di Renzo is with Universit\'{e} Paris-Saclay, CNRS, CentraleSup\'{e}lec, Laboratoire des Signaux et Syst\`{e}mes, 3 Rue Joliot-Curie, 91192 Gif-sur-Yvette, France (email: marco.di-renzo@universite-paris-saclay.fr).

Part of this paper was presented at the IEEE 98th Vehicular Technology Conference (VTC), Hong Kong, China, October 2023 [1].}
}
\maketitle
\begin{abstract}
This paper investigates the integration of a reconfigurable intelligent surface (RIS) into a secure multiuser massive multiple-input multiple-output (MIMO) system in the presence of transceiver hardware impairments (HWI), imperfect channel state information (CSI), and spatially correlated channels. We first introduce a linear minimum-mean-square error estimation algorithm for the aggregate channel by considering the impact of transceiver HWI and RIS phase-shift errors. Then, we derive a lower bound for the achievable ergodic secrecy rate in the presence of a multi-antenna eavesdropper when artificial noise (AN) is employed at the base station (BS). In addition, the obtained expressions of the ergodic secrecy rate are further simplified in some noteworthy special cases to obtain valuable insights. To counteract the effects of HWI, we present a power allocation optimization strategy between the confidential signals and AN, which admits a fixed-point equation solution. Our analysis reveals that a non-zero ergodic secrecy rate is preserved if the total transmit power decreases no faster than $1/N$, where $N$ is the number of RIS elements. Moreover, the ergodic secrecy rate grows logarithmically with the number of BS antennas $M$ and approaches a certain limit in the asymptotic regime $N\rightarrow\infty$. Simulation results are provided to verify the derived analytical results. They reveal the impact of key design parameters on the secrecy rate. It is shown that, with the proposed power allocation strategy, the secrecy rate loss due to HWI can be counteracted by increasing the number of low-cost RIS elements. 
\end{abstract}

\begin{IEEEkeywords}
Reconfigurable intelligent surface (RIS), hardware impairments, physical layer security, massive MIMO, channel estimation.
\end{IEEEkeywords}
\setlength{\parskip}{0\baselineskip}

\section{Introduction}
Massive multiple-input multiple-output (MIMO) is a critical technology to support wireless networks for numerous emerging applications [2], [3]. However, massive MIMO requires the deployment of expensive and energy-consuming hardware, such as dedicated power amplifiers and high-resolution digital-to-analog converters (DACs). With the rapid development of metamaterials and millimeter-wave systems on chips (SoC), reconfigurable intelligent surface (RIS) has attracted much attention as a promising technology in the field of wireless communications [4], [5]. Specifically, an RIS is composed of numerous inexpensive nearly passive scattering elements that modify the amplitude and phase of electromagnetic waves and alter the environment in which the waves propagate [6]--[8]. In light of these attractive features, fundamental limits and transmission design have been investigated in [9], which revealed that RIS-assisted massive MIMO offers substantial performance gains.

Thanks to the capability of shaping the propagation characteristics of wireless channels, an RIS can be naturally applied to improve the physical layer security (PLS) by performing nearly passive beamforming. Conventional PLS techniques utilize massive MIMO to focus radio-frequency (RF) signals toward the legitimate users through beamforming, while simultaneously minimize the signal power to the eavesdroppers (Eves) [10], [11]. However, in passive eavesdropping scenarios, the channel state information (CSI) of Eves is rarely available at the base station (BS) and the secrecy performance deteriorates substantially when the channels of the legitimate users and Eves are strongly correlated [12], [13]. Benefiting from the ability to reconfigure the wireless propagation environments, an RIS can be utilized for decorrelating these channels, resulting in improved secrecy performance. 

Recently, RIS has triggered upswing research interest in improving the security in wireless communications [14]--[17]. In [14], the transmit and RIS beamforming were jointly designed by means of artificial noise (AN). It was proved that AN enhances the secrecy rate. The authors of [15] designed secure communications by tackling a multiuser power minimization problem at the transmitter. In [16], in addition, the sum secrecy rate maximization problem was solved by taking multiple single-antenna Eves into account. In [17], a transmission scheme was developed for secure communications by considering outage probabilistic constraints. Apart from the aforementioned representative algorithms for RIS design, the secrecy performance gain that an RIS can provide has recently been characterized from the theoretical standpoint [18]--[23]. In particular, the ergodic secrecy rate in the presence of multiple single-antenna Eves was analyzed in [18]. By using a stochastic geometry method, the secrecy outage probability was then investigated with randomly distributed Eves [19]. The authors of [20] considered the performance limits of an RIS-aided MIMO wiretap channel by exploiting random matrix theory. Regarding RIS-aided massive MIMO systems, the secrecy performance under spatially independent and correlated channels were investigated in [21] and [22], respectively. Further work in [23] designed detectors for RIS-aided massive MIMO systems to counteract active pilot contamination attacks.

Although the secrecy performance can be significantly enhanced by utilizing an RIS, the above-mentioned works considered the assumption that the transceiver is equipped with perfect hardware. In practice, low-cost hardware suffers from defects like power amplifier nonlinearity, oscillator phase noise, and quantization errors due to the use of low-resolution DAC. Moreover, due to the finite resolution of the physical components, an RIS is affected by phase noise, the distribution of which is usually characterized by a uniform distribution or a von Mises distribution [24]--[26]. Therefore, if the hardware impairments (HWI) are not carefully considered, severe performance degradations may occur [27]. Currently, there have been several studies on improving the secrecy performance in RIS-assisted MIMO systems under HWI. For example, an RIS-aided secure multiple-input single-output (MISO) system was considered in the presence of HWI [28]. The authors of [29] introduced a fairness-based scheme to optimize the weighted minimum secrecy rate in a hardware-impaired RIS-aided multiuser MISO downlink channel. In [30], the impact of phase-shift errors on the secrecy rate was inspected in an RIS-aided MISO uplink channel, where the required user power was deduced under a specified target secrecy rate.

However, the studies reported in [14], [16], [18]--[22] assume perfect CSI at the BS, while obtaining CSI in RIS-aided systems is a difficult task in contrast to conventional systems without an RIS. In this regard, an uplink channel estimation method was introduced in [23] in the absence of prior information about active pilot attacks. Only perfect transceivers and RIS hardware were considered. The work in [31] examined the impact of imperfect CSI on the secrecy rate in an RIS-aided non-orthogonal multiple access (NOMA) system, but no comprehensive analysis on the secrecy rate was provided. Furthermore, because of the physical finite size of devices and the small interdistances among the RIS elements and BS antennas, the spatial correlation cannot be ignored [32]--[35]. In that regard, spatially correlated fading channels have been taken into account considering a multiple-antenna Eve in [22] and a single-antenna Eve in [23], respectively. However, the analyses reported in [22] and [23] are not immediately applicable to RIS-aided secure networks due to the presence of both CSI and hardware imperfections. More importantly, the legitimate users may employ low-cost hardware components that also contribute to the HWI, while Eve is expected to use high-quality equipment with negligible HWI. Therefore, it is indispensable to understand theoretical performance bounds in terms of communication secrecy with the aid of an RIS.

Motivated by these observations, we analyze the secrecy performance of an RIS-assisted multiuser massive MIMO system, in which the impact of CSI imperfection, RIS phase noise, spatial correlation, and transceiver HWI are quantitatively characterized. The main contributions of this paper are summarized as follows:
\begin{itemize}
\item We derive new closed-form expressions for the achievable ergodic secrecy rate considering AN and realistic constraints including RIS phase noise, imperfect CSI, and transceiver HWI. To the best of our knowledge, the analytical characterization of the secrecy performance in RIS-aided massive MIMO systems under these practical operating conditions is still not available in the literature.
\item From the analytical expressions of the ergodic secrecy rate, several insights on the impact of HWI and imperfect CSI on the secrecy performance are obtained. In particular, we show that the ergodic secrecy rate increases logarithmically with the number of BS antennas while it approaches a finite limit as the number of RIS elements increases without bounds. As a major outcome, we prove that non-zero secrecy rates are sustained when the transmit power scales as $1/N$ with the number of RIS elements $N$ .
\item We develop a power allocation strategy to alleviate the impact of HWI and deduce a closed-form solution for the optimal fraction of power to be allocated between the information data and AN. Based on the obtained theoretical and numerical results, we show that there are situations in which the transmit distortion noise improves the secrecy rate by having an impact akin to the AN. Furthermore, we demonstrate that increasing the number of RIS elements may counteract the reduction of secrecy performance imposed by the transceiver HWI and RIS phase distortion. Nonetheless, the secrecy rate may be compromised by the presence of spatial correlation among the RIS elements.
\end{itemize}

The remainder of this paper is organized as follows. The system model is introduced in Section II. In Section III, the uplink training and downlink transmission strategies under HWI are described. A tractable lower bound for the secrecy rate and simple analytical expressions in some asymptotic regimes are presented in Section IV. In Section V, numerical results are illustrated. Finally, Section VI concludes the paper.

\emph{Notation}: ${\bf X}^{-1}$ and ${\bf X}^H$ denote the inverse and conjugate transpose of the matrix $\bf X$, respectively. $\mathbb E\{\cdot\}$ and $\mathbb V\rm ar\{\cdot\}$ denote the expectation and variance of a random variable, respectively. The complex Gaussian distribution with zero mean and $\sigma^2$ variance is denoted by $\mathcal {CN}(0,\sigma^2)$. The trace of the matrix $\bf X$ is denoted by ${\rm tr}({\bf X})$. Besides, $[{\bf X}]_{a,b}$ denotes the $(a,b)$-th element of the matrix. $\mathbb C^{a\times b}$ is the set of $a\times b$ complex matrices, $\circ$ represents the Hadamard product, and $\lfloor \cdot \rfloor$ is the floor function.

\section{System Model}
\begin{figure}[t]
	\centering
	\centerline{\includegraphics[width=0.4\textwidth]{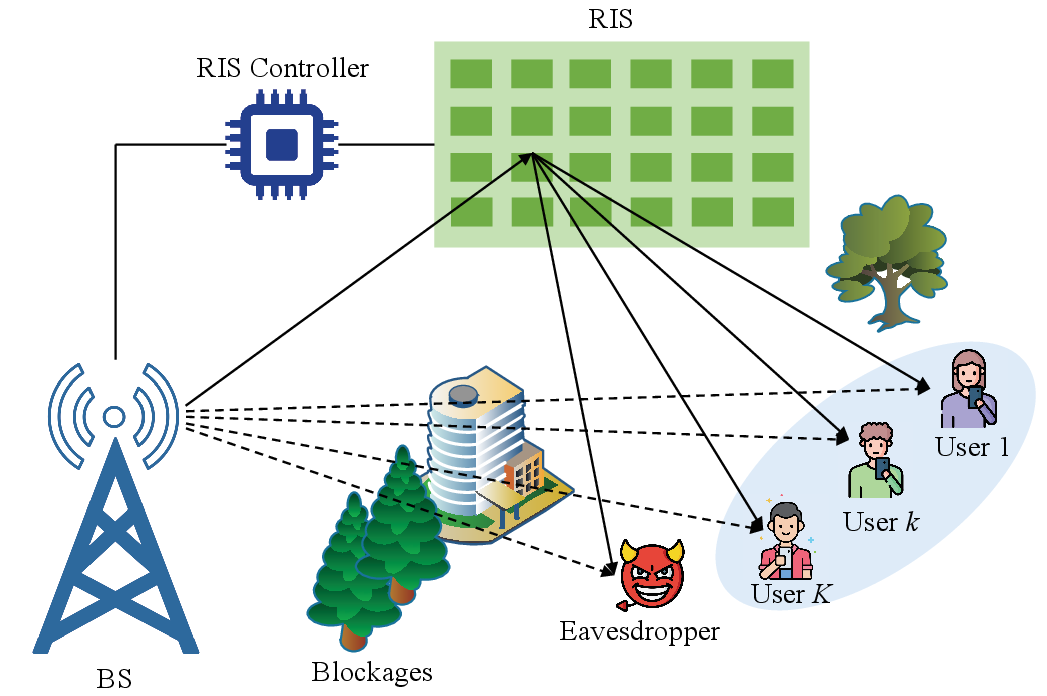}}
	\caption{RIS-aided downlink secure massive MIMO system in the presence of an $M_{\rm E}$-antenna Eve}
\end{figure}
We consider the downlink of an RIS-aided multiuser secure massive MIMO system as shown in Fig. 1, where a BS equipped with $M$ antennas communicates with $K$ single-antenna legitimate users with the assistance of an RIS equipped with $N$ reflecting elements. One $M_{\rm E}$-antenna Eve is located close to the legitimate users. This assumption can be applied to the scenario in which multiple Eves cooperate to wiretap the same confidential information. The RIS receives processing messages through a smart controller under the coordination of the BS. We assume that the CSI of Eve's channel is accessible at the BS, which is a widely used case study in literature, e.g., [12], [22], [36]. Also, we assume that the system operates under a standard time division duplex (TDD) protocol, where the uplink channel estimation and downlink data transmission are carried out during each coherence time block.

The channels of the BS-RIS link, RIS-user link, BS-user link, RIS-Eve link, and BS-Eve link are assumed to be flat-fading channels, and are denoted by ${\bf H}_1\in\mathbb C^{M\times N}$, ${\bf h}_{{\rm I},k}\in\mathbb C^{N\times 1}$, ${\bf h}_{{\rm B},k}\in\mathbb C^{M\times 1}$, ${\bf H}_{\rm I,E}\in\mathbb C^{N\times M_{\rm E}}$, and ${\bf H}_{\rm B,E}\in\mathbb C^{M\times M_{\rm E}}$, respectively. Specifically, we consider spatially correlated Rayleigh fading channels given by
\begin{align}
	{\bf h}_{{\rm I},k}={\bf R}_{{\rm I},k}^{1/2}{\bf g}_{{\rm I},k}\quad {\text {and}}\quad {\bf h}_{{\rm B},k}={\bf R}_{{\rm B},k}^{1/2}{\bf g}_{{\rm B},k},
\end{align}
where ${\bf R}_{{\rm I},k}$ and ${\bf R}_{{\rm B},k}$ denote the spatial correlation matrices at the RIS and BS, respectively, and ${\bf g}_{{\rm I},k}\sim\mathcal{CN}(0,{\bf I}_N)$ and ${\bf g}_{{\rm B},k} \sim\mathcal{CN}(0,{\bf I}_M)$ are the small-scale fading vectors. Similar to the channel model of the legitimate users, the channels of the links from the RIS/BS to Eve are given by
\begin{align}
	{\bf H}_{\rm I,E}={\bf R}_{\rm I,E}^{1/2}{\bf G}_{\rm I,E}\quad {\text {and}}\quad  {\bf H}_{\rm B,E}={\bf R}_{\rm B,E}^{1/2}{\bf G}_{\rm B,E},
\end{align}
where ${\bf R}_{\rm I,E}$ and ${\bf R}_{\rm B,E}$ denote the spatial correlation matrices and the elements of ${\bf G}_{\rm I,E}$ and ${\bf G}_{\rm B,E}$ follow the distribution $\mathcal{CN}(0,1)$. The large-scale fading coefficients are absorbed into the spatial correlation matrices, that is, ${\bf R}_{{\rm B},k}$ and ${\bf R}_{\rm B,E}$ are defined as ${\bf R}_{{\rm B},k}=\beta_{2,k}{\bf R}_{\rm B}$ and ${\bf R}_{\rm B,E}=\beta_3{\bf R}_{\rm B}$, where $\beta_{2,k}$ and $\beta_3$ are the path losses. Moreover, ${\bf R}_{{\rm I},k}$ and ${\bf R}_{\rm I,E}$ can be expressed as ${\bf R}_{{\rm I},k}=\beta_{{\rm I},k}{\bf R}_{\rm I}$ and ${\bf R}_{\rm I,E}=\beta_{\rm I,E}{\bf R}_{\rm I}$, where $\beta_{{\rm I},k}$ and  $\beta_{\rm I,E}$ account for the large-scale path losses of the channels ${\bf h}_{{\rm I},k}$ and ${\bf H}_{\rm I,E}$, respectively. The $(x,y)$-th element of the matrix ${\bf R}_{\rm I}$ can be written as [32]
\begin{align}
	[{\bf R}_{\rm I}]_{x,y}={\rm sinc}\left(\frac{2\Vert{\bf c}_x-{\bf c}_y\Vert}{\lambda}\right),
\end{align}
where ${\bf c}_x=[0,{\rm mod}(x-1,N_{\rm H})d_h,\lfloor(x-1)/N_{\rm V}\rfloor d_v]^T$, $\lambda$ is the wavelength, and $N_{\rm H}$ and $N_{\rm V}$ represent the number of RIS elements of the UPA on the horizontal and vertical directions, respectively. ${\bf R}_{\rm B}$ and ${\bf R}_{\rm I}$ are nonnegative deterministic matrices having traces ${\rm tr}({\bf R}_{\rm B})=M$ and ${\rm tr}({\bf R}_{\rm I})=N$.

Since the BS and the RIS are usually located at high elevations and in predetermined locations, the channel between the BS and the RIS is assumed to be in line of sight (LoS) and hence we ignore any non-LoS (NLoS) paths for this channel [26]. LoS channels are usually obtained when the RIS is placed close to the BS [37]. Specifically, the LoS channel ${\bf H}_1$ is modeled as $[{\bf H}_1]_{m,n}=\sqrt{\beta_1}\bar{\bf H}_1$, where $\beta_1$ is the large-scale path loss and $\bar{\bf H}_1$ is formulated similar to [22]. Furthermore, the RIS phase shift matrix is denoted by ${\bf \Phi}={\rm diag}\left(e^{j\theta_1}, \dots ,e^{j\theta_N}\right)\in\mathbb C^{N\times N}$ with $\theta_n\in[0, 2\pi), \forall n=1, \dots ,N$. In most works, the RIS phase shift is assumed to be perfectly designed [28], [31]. Owing to the presence of HWI, however, a phase-shift error usually exists, and it is modeled as ${\bf \Theta}={\rm diag}\left(e^{j\tilde{\theta}_1}, \dots ,e^{j\tilde{\theta}_N}\right)$, where $\tilde{\theta}_n,\forall n=1, \dots ,N$ are often assumed to be von Mises or uniform distributed random variables with zero mean, i.e., $\tilde{\theta}_n\sim\mathcal {VM}(0,\nu_p)$ and $\mathcal U[-\iota_p,\iota_p]$, respectively [38]. Here, $\nu_p$ is the concentration parameter, and we denote the phase noise power by $\sigma_p^2$, corresponding to $\nu_p=1/\sigma_p^2$ and $\iota_p=\sqrt{3\sigma_p^2}$.

\section{Channel Estimation and Data Transmission}
In this section, we first use the linear minimum mean square error (LMMSE) method to estimate the aggregate instantaneous channels. Then, we present the downlink data transmission scheme by introducing the impact of imperfect hardware at both the RIS and the transceiver.

\subsection{Uplink Channel Estimation}
Here, we present the uplink channel estimation assuming that the RIS phase shifts are given. Considering a TDD system, the CSI can be obtained by uplink training assuming channel reciprocity [23], [26], [39]. We assume that the legitimate user~$k$ transmits a pilot sequence of ${\boldsymbol\phi}_k\in\mathbb C^{\tau_u\times 1}$ in the uplink training interval $\tau_u$, where ${\boldsymbol\phi}_k^H{\boldsymbol\phi}_k=\tau_u$ and $\vert\phi_{k,i}\vert^2=1,\forall k,i$. The pilot signals are designed to be orthogonal among the users, i.e., ${\boldsymbol\phi}_k^H{\boldsymbol\phi}_i=0,\forall i\neq k$. We assume $\tau_u\geq K$ to ensure that there is no pilot contamination. As such, the received hardware-impaired pilot matrix ${\bf Y}_p\in\mathbb C^{M\times \tau_u}$ is given by
\begin{align}
{\bf Y}_p=\sum\limits_{i=1}^{K}({\bf H}_1{\bf \Phi}{\bf \Theta}{\bf h}_{{\rm I},i}+{\bf h}_{{\rm B},i}
)\left(\sqrt{\rho }{\boldsymbol\phi}_i^H+{\boldsymbol \eta}_{t,i}^H\right)+{\boldsymbol \Upsilon}_r^{\rm BS}+{\bf N}_p,
\end{align}
where ${\boldsymbol \eta}_{t,i}\in\mathbb C^{\tau_u\times 1}$ and ${\boldsymbol \Upsilon}_r^{\rm BS}\in\mathbb C^{M\times \tau_u}$ denote the additive RF hardware distortion of the $i$-th legitimate user and the BS, respectively. ${\bf N}_p\in\mathbb C^{M\times\tau_u}$ is the additive white Gaussian noise (AWGN) matrix with entries distributed as $\mathcal {CN}(0,\sigma_u^2)$. Based on (4), we denote the aggregate channel of the legitimate user $k$ as ${\bf h}_k = {\bf H}_1{\bf \Phi}{\bf \Theta}{\bf h}_{{\rm I},k}+{\bf h}_{{\rm B},k}$. In particular, the distribution of the entries of ${\boldsymbol \eta}_{t,k}$ is $\mathcal {CN}(0,\rho\kappa_t^{\rm UE}{\bf C}_t)$ with ${\bf C}_t={\rm diag}\left(\mathbb E\{{\boldsymbol\phi}_k{\boldsymbol\phi}_k^H\}\right)$, while each column of ${\boldsymbol \Upsilon}_r^{\rm BS}$ is distributed as ${\boldsymbol \upsilon}_r^{\rm BS}\sim\mathcal {CN}(0, \rho\kappa_r^{\rm BS}{\bf D}_r)$ with ${\bf D}_r=\sum_{i=1}^{K}\mathbb E\{{\boldsymbol\phi}_k{\boldsymbol\phi}_k^H\} {\rm diag}\big(\vert h_{i,1}\vert^2,\dots ,\vert h_{i,m}\vert^2, \dots ,\vert h_{i,M}\vert^2\big)$ and $h_{i,m}$ being the $m$-th element of ${\bf h}_i\ (m=1, \dots ,M)$. The variance is equal to ${\bf D}_r=\sum_{i=1}^{K}\mathbb E\{{\boldsymbol\phi}_k{\boldsymbol\phi}_k^H\}{\bf I}_M\circ {\bf h}_i{\bf h}_i^H$ [40].

To estimate ${\bf h}_k$, the received signal ${\bf Y}_p$ is multiplied by the pilot signal ${\boldsymbol\phi}_k$ to eliminate the interference. Then, we have
\begin{align}
{\bf y}_{p,k}=\ &{\bf Y}_p{\boldsymbol\phi}_k\nonumber\\
=\ &\tau_u\sqrt{\rho}{\bf h}_k+\sum\limits_{i=1}^{K}{\boldsymbol \eta}_{t,i}^H{\boldsymbol\phi}_k{\bf h}_i+{\boldsymbol \Upsilon}_r^{\rm BS}{\boldsymbol\phi}_k+{\bf N}_p{\boldsymbol\phi}_k.
\end{align}
The BS uses the signal ${\bf y}_k^p$ to obtain the LMMSE estimate of ${\bf h}_k$. This channel estimator is derived as follows.
\begin{lemma}
The LMMSE estimate of ${\bf h}_k$ is obtained as
\begin{align}
\hat{\bf h}_k=\sqrt{\rho}{\bf R}_k{\bf\Psi}_k^{-1}{\bf y}_{p,k},
\end{align}
where
\begin{align}
&{\bf\Psi}_k=\tau_u\rho{\bf R}_k+\rho\kappa_t^{\rm UE}\sum_{i=1}^{K}{\bf R}_i+\rho\kappa_r^{\rm BS}\sum_{i=1}^{K}{\bf I}_M\circ{\bf R}_i+\sigma_u^2{\bf I}_M,\\
&{\bf R}_k={\bf R}_{{\rm B},k}+{\bf H}_1{\bf \Phi}\widetilde{\bf R}_{{\rm I},k}{\bf \Phi}^H{\bf H}_1^H,
\end{align}
and
\begin{align}
\widetilde{\bf R}_{{\rm I},k}=\varrho^2{\bf R}_{{\rm I},k}+\beta_{{\rm I},k}(1-\varrho^2){\bf I}_N,
\end{align}
with $\varrho=\frac{I_1(\nu_p)}{I_0(\nu_p)}$ for the von Mises distribution and $\varrho=\frac{\sin(\iota_p)}{\iota_p}$ for the uniform distribution.
\end{lemma}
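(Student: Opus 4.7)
The plan is to apply the standard LMMSE formula
\[
\hat{\mathbf{h}}_k = \mathbb{E}\{\mathbf{h}_k\mathbf{y}_{p,k}^H\}\bigl(\mathbb{E}\{\mathbf{y}_{p,k}\mathbf{y}_{p,k}^H\}\bigr)^{-1}\mathbf{y}_{p,k},
\]
so the task reduces to computing the cross- and auto-covariance in closed form. First I would compute the second-order statistic of the aggregate channel $\mathbf{h}_k=\mathbf{H}_1\boldsymbol{\Phi}\boldsymbol{\Theta}\mathbf{h}_{\mathrm{I},k}+\mathbf{h}_{\mathrm{B},k}$. Using independence and zero-mean of $\mathbf{h}_{\mathrm{I},k}$ and $\mathbf{h}_{\mathrm{B},k}$, the cross terms vanish and I get $\mathbb{E}\{\mathbf{h}_k\mathbf{h}_k^H\}=\mathbf{R}_{\mathrm{B},k}+\mathbf{H}_1\boldsymbol{\Phi}\,\mathbb{E}\{\boldsymbol{\Theta}\mathbf{R}_{\mathrm{I},k}\boldsymbol{\Theta}^H\}\,\boldsymbol{\Phi}^H\mathbf{H}_1^H$.

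The key intermediate step is evaluating $\mathbb{E}\{\boldsymbol{\Theta}\mathbf{R}_{\mathrm{I},k}\boldsymbol{\Theta}^H\}$ elementwise. For the diagonal entries I use $|e^{j\tilde\theta_n}|^2=1$, which preserves the large-scale factor $[\mathbf{R}_{\mathrm{I},k}]_{n,n}=\beta_{\mathrm{I},k}$. For the off-diagonal entries with $m\neq n$, independence of the phase errors gives $\mathbb{E}\{e^{j(\tilde\theta_m-\tilde\theta_n)}\}=|\mathbb{E}\{e^{j\tilde\theta_n}\}|^2$, and this characteristic function is precisely $\varrho=I_1(\nu_p)/I_0(\nu_p)$ for von Mises errors (standard Bessel-function identity) and $\varrho=\sin(\iota_p)/\iota_p$ for uniform errors. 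Combining the two cases into a single matrix expression yields $\mathbb{E}\{\boldsymbol{\Theta}\mathbf{R}_{\mathrm{I},k}\boldsymbol{\Theta}^H\}=\varrho^2\mathbf{R}_{\mathrm{I},k}+\beta_{\mathrm{I},k}(1-\varrho^2)\mathbf{I}_N=\widetilde{\mathbf{R}}_{\mathrm{I},k}$, which substituted above gives $\mathbb{E}\{\mathbf{h}_k\mathbf{h}_k^H\}=\mathbf{R}_k$ as defined in (8).

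Next I would expand $\mathbf{y}_{p,k}$ as given in (5) and compute each term in $\mathbb{E}\{\mathbf{y}_{p,k}\mathbf{y}_{p,k}^H\}$ separately. The deterministic pilot contribution produces $\tau_u^2\rho\,\mathbf{R}_k$. The UE distortion term is handled by conditioning on $\{\mathbf{h}_i\}$, using that $\boldsymbol{\eta}_{t,i}\sim\mathcal{CN}(0,\rho\kappa_t^{\mathrm{UE}}\mathbf{C}_t)$ is independent across $i$, and that $\boldsymbol{\phi}_k^H\mathbf{C}_t\boldsymbol{\phi}_k=\tau_u$ because $|\phi_{k,\ell}|^2=1$; this yields $\rho\kappa_t^{\mathrm{UE}}\tau_u\sum_{i}\mathbf{R}_i$. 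The BS distortion term uses the column-wise law of $\boldsymbol{\Upsilon}_r^{\mathrm{BS}}$ with covariance $\rho\kappa_r^{\mathrm{BS}}\mathbf{D}_r$, and after taking the expectation of $\mathbf{D}_r$ over the channel and exploiting $\mathbf{I}_M\circ\mathbb{E}\{\mathbf{h}_i\mathbf{h}_i^H\}=\mathbf{I}_M\circ\mathbf{R}_i$, it contributes $\rho\kappa_r^{\mathrm{BS}}\tau_u\sum_i\mathbf{I}_M\circ\mathbf{R}_i$. The noise term contributes $\sigma_u^2\tau_u\mathbf{I}_M$ since $\|\boldsymbol{\phi}_k\|^2=\tau_u$. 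For the cross-covariance, orthogonality of the data channel with the hardware distortions and noise leaves only $\mathbb{E}\{\mathbf{h}_k\mathbf{y}_{p,k}^H\}=\tau_u\sqrt{\rho}\,\mathbf{R}_k$.

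Finally, I would factor a common $\tau_u$ from $\mathbb{E}\{\mathbf{y}_{p,k}\mathbf{y}_{p,k}^H\}=\tau_u\,\boldsymbol{\Psi}_k$, so the LMMSE estimator simplifies to $\hat{\mathbf{h}}_k=\sqrt{\rho}\,\mathbf{R}_k\boldsymbol{\Psi}_k^{-1}\mathbf{y}_{p,k}$, matching (6)–(9). The main obstacle is the careful handling of the phase-error expectation, because one has to correctly separate diagonal from off-diagonal contributions of $\mathbf{R}_{\mathrm{I},k}$ and identify the characteristic function value $\varrho$ uniformly for both assumed distributions; the remaining bookkeeping is routine once the factor $\tau_u$ is extracted consistently from every second-order moment.
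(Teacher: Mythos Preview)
Your proposal is correct and follows essentially the same route as the paper's proof in Appendix~A: both apply the standard LMMSE formula, first compute $\mathbb{E}\{\boldsymbol{\Theta}\mathbf{R}_{\mathrm{I},k}\boldsymbol{\Theta}^H\}=\widetilde{\mathbf{R}}_{\mathrm{I},k}$ to obtain $\mathbf{R}_k$, then evaluate $\mathbb{E}\{\mathbf{h}_k\mathbf{y}_{p,k}^H\}=\tau_u\sqrt{\rho}\,\mathbf{R}_k$ and $\mathbb{E}\{\mathbf{y}_{p,k}\mathbf{y}_{p,k}^H\}=\tau_u\boldsymbol{\Psi}_k$ term by term, and finally factor out $\tau_u$. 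Your treatment is in fact more explicit than the paper's in spelling out the diagonal/off-diagonal split for the phase-error expectation and the conditioning argument for the distortion terms, but the substance is identical.
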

\begin{proof}
See Appendix A.
\end{proof}

Since only the aggregated channel with dimension $M\times K$ is estimated in the uplink, the training overhead for the LMMSE scheme is $K$. According to the channel estimate in (6), the normalized mean square error (NMSE) of ${\bf h}_k$ is given by
\begin{align}
{\rm NMSE}_k=\frac{\mathbb E\big\{\Vert{\bf h}_k-\hat{\bf h}_k\Vert^2\big\}}{\mathbb E\big\{\Vert{\bf h}_k\Vert^2\big\}}=\frac{{\rm tr}({\bf C}_k)}{{\rm tr}({\bf R}_k)},
\end{align}
where ${\bf C}_k={\bf R}_k-\tau_u\rho{\bf R}_k{\bf \Psi}_k^{-1}{\bf R}_k$. We note that $\hat{\bf h}_k$ and the estimation error ${\bf e}_k={\bf h}_k-\hat{\bf h}_k$ are uncorrelated but not necessarily independent, with $\mathbb E\{\hat{\bf h}_k\hat{\bf h}_k^H\}=\tau_u\rho{\bf R}_k{\bf \Psi}_k^{-1}{\bf R}_k$ and $\mathbb E\{{\bf e}_k{\bf e}_k^H\}={\bf C}_k$. We now examine the asymptotic NMSE under the assumption of a high pilot power.
\begin{corollary}
Using (6) and (10), the ${\rm NMSE}_k$ of the LMMSE estimator for $\rho \rightarrow\infty$ is given by
\begin{align}
{\rm NMSE}_k=\frac{{\rm tr}({\bf R}_k-\tau_u{\bf R}_k\widetilde{\bf \Psi}_k^{-1}{\bf R}_k)}{{\rm tr}({\bf R}_k)},
\end{align}
where $\widetilde{\bf\Psi}_k=\tau_u{\bf R}_k+\kappa_t^{\rm UE}\sum_{i=1}^{K}{\bf R}_i+\kappa_r^{\rm BS}\sum_{i=1}^{K}{\bf I}_M\circ{\bf R}_i$.
\end{corollary}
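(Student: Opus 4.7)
The plan is to obtain the claim by a direct asymptotic analysis of the matrix ${\bf\Psi}_k$ in (7), exploiting the fact that every term of ${\bf\Psi}_k$ carries a factor of $\rho$ except for the noise term $\sigma_u^2{\bf I}_M$. So the strategy is simply: factor out $\rho$, let $\rho\to\infty$, push the limit through the inverse, and substitute into (10).

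Concretely, I would first rewrite ${\bf\Psi}_k$ from (7) in the form
\begin{align}
{\bf\Psi}_k=\rho\left({\bf A}_k+\frac{\sigma_u^2}{\rho}{\bf I}_M\right),\quad {\bf A}_k\triangleq \tau_u{\bf R}_k+\kappa_t^{\rm UE}\sum_{i=1}^{K}{\bf R}_i+\kappa_r^{\rm BS}\sum_{i=1}^{K}{\bf I}_M\circ{\bf R}_i,
\end{align}
so that ${\bf A}_k$ coincides with $\widetilde{\bf\Psi}_k$ as defined in the statement. Next I would note that ${\bf A}_k$ is positive definite: the term $\tau_u{\bf R}_k$ dominates and is itself positive definite under the standing channel assumptions (${\bf R}_{{\rm B},k}$ is nonsingular and ${\bf H}_1{\bf\Phi}\widetilde{\bf R}_{{\rm I},k}{\bf\Phi}^H{\bf H}_1^H$ is positive semidefinite), so ${\bf A}_k^{-1}$ exists. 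Therefore ${\bf\Psi}_k$ is invertible for all sufficiently large $\rho$ and
\begin{align}
{\bf\Psi}_k^{-1}=\frac{1}{\rho}\left({\bf A}_k+\frac{\sigma_u^2}{\rho}{\bf I}_M\right)^{-1}\xrightarrow{\rho\to\infty}\frac{1}{\rho}{\bf A}_k^{-1}+o(1/\rho),
\end{align}
by continuity of matrix inversion at an invertible point.

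Substituting this into the error-covariance matrix ${\bf C}_k={\bf R}_k-\tau_u\rho{\bf R}_k{\bf\Psi}_k^{-1}{\bf R}_k$ from the discussion just below (10), the explicit $\rho$ factor cancels against the $1/\rho$ arising from ${\bf\Psi}_k^{-1}$, yielding
\begin{align}
\lim_{\rho\to\infty}{\bf C}_k={\bf R}_k-\tau_u{\bf R}_k\widetilde{\bf\Psi}_k^{-1}{\bf R}_k.
\end{align}
Dividing by $\mathrm{tr}({\bf R}_k)$, which does not depend on $\rho$, gives the stated expression (11) for the asymptotic $\mathrm{NMSE}_k$.

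There is no genuine obstacle here; the argument is a one-line limit once the $\rho$ scaling is exposed. The only subtlety worth flagging explicitly in the write-up is the invertibility of $\widetilde{\bf\Psi}_k$, which guarantees that taking the limit inside the inverse is legitimate and that the limiting expression in (11) is well defined.
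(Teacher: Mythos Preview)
Your argument is correct and is exactly the natural route: factor $\rho$ out of ${\bf\Psi}_k$, let the residual noise term $\tfrac{\sigma_u^2}{\rho}{\bf I}_M$ vanish, and observe that the explicit $\rho$ in $\tau_u\rho{\bf R}_k{\bf\Psi}_k^{-1}{\bf R}_k$ cancels. The paper states Corollary~1 without proof, so there is nothing further to compare; your write-up supplies precisely the one-line limit the authors are implicitly invoking, and your remark on the invertibility of $\widetilde{\bf\Psi}_k$ is the only point that deserves a sentence of justification.
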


It is inspected from (11) that, as $\rho \rightarrow\infty$, the NMSE$_k$ goes to zero if the hardware is assumed to be ideal, i.e., $\kappa_t^{\rm UE} = \kappa_r^{\rm BS} = 0$. We further give asymptotic results to analyze how the number of RIS elements affects the accuracy of the estimation.

\begin{corollary}
When $N\gg M$, ${\bf R}_{{\rm B}}={\bf R}_{{\rm I}}={\bf I}_N$, and $\kappa_t^{\rm UE} = \kappa_r^{\rm BS}=\varrho= 0$, the ${\rm NMSE}_k$ is given by
\begin{align}
{\rm NMSE}_k\rightarrow1-\frac{\beta_{2,k}+\beta_{{\rm I},k}\beta_1 N}{\beta_{2,k}+\beta_{{\rm I},k}\beta_1 N+\frac{\sigma_u^2}{\tau_u\rho}}.
\end{align}
\end{corollary}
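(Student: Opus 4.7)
The plan is to start from the general NMSE formula in (10) and progressively specialize it using the four simplifying assumptions, ending with a scalar expression whose numerator–denominator form matches the claim. I would first apply $\kappa_t^{\rm UE}=\kappa_r^{\rm BS}=0$ to collapse ${\bf\Psi}_k$ in (7) to $\tau_u\rho{\bf R}_k+\sigma_u^2{\bf I}_M$. Next, with $\varrho=0$, equation (9) reduces $\widetilde{\bf R}_{{\rm I},k}$ to $\beta_{{\rm I},k}{\bf I}_N$, so that (8) becomes ${\bf R}_k=\beta_{2,k}{\bf R}_{\rm B}+\beta_{{\rm I},k}{\bf H}_1{\bf\Phi}{\bf\Phi}^H{\bf H}_1^H=\beta_{2,k}{\bf I}_M+\beta_{{\rm I},k}{\bf H}_1{\bf H}_1^H$, where I used ${\bf R}_{\rm B}={\bf I}_M$ and the fact that ${\bf\Phi}$ is diagonal with unit-modulus entries so ${\bf\Phi}{\bf\Phi}^H={\bf I}_N$, independent of the phase-shift realization.

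The key step is then to argue that ${\bf H}_1{\bf H}_1^H\to \beta_1 N{\bf I}_M$ in the regime $N\gg M$. Because ${\bf H}_1$ is a pure LoS channel of the form $[{\bf H}_1]_{m,n}=\sqrt{\beta_1}\,e^{j\alpha_{mn}}$ (with the UPA array response used in (3)), the diagonal of ${\bf H}_1{\bf H}_1^H$ equals $\beta_1 N$ exactly, while the off-diagonal entries are inner products of $N$ unit-modulus array response samples that vanish (relative to $\beta_1 N$) as $N$ grows with $M$ fixed. Substituting this back yields ${\bf R}_k\to(\beta_{2,k}+\beta_{{\rm I},k}\beta_1 N){\bf I}_M$, and correspondingly ${\bf\Psi}_k\to\bigl[\tau_u\rho(\beta_{2,k}+\beta_{{\rm I},k}\beta_1 N)+\sigma_u^2\bigr]{\bf I}_M$. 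Because both matrices are scalar multiples of ${\bf I}_M$, the matrix product in ${\bf C}_k={\bf R}_k-\tau_u\rho{\bf R}_k{\bf\Psi}_k^{-1}{\bf R}_k$ commutes trivially and also becomes a scalar multiple of ${\bf I}_M$.

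The last step is purely algebraic: plug the scalar matrices into the trace ratio (10), cancel the common factor $M$, and simplify
\begin{align*}
\frac{\operatorname{tr}({\bf C}_k)}{\operatorname{tr}({\bf R}_k)}
=1-\frac{\tau_u\rho(\beta_{2,k}+\beta_{{\rm I},k}\beta_1 N)}{\tau_u\rho(\beta_{2,k}+\beta_{{\rm I},k}\beta_1 N)+\sigma_u^2},
\end{align*}
and finally divide numerator and denominator of the fraction by $\tau_u\rho$ to reach the stated form.

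The main obstacle, and the only part requiring care, is the asymptotic decorrelation claim ${\bf H}_1{\bf H}_1^H\to \beta_1 N{\bf I}_M$; everything else is substitution and a scalar simplification. I would justify this claim by appealing to the standard favorable-propagation property of large aperture UPAs already invoked in the modeling of $\bar{\bf H}_1$ via (3)-type structure, noting that the diagonal is exact for $N$ finite while the off-diagonal terms are $O(1)$ and hence negligible compared with $\beta_1 N$ when $N\gg M$; this is precisely the regime stated in the corollary.
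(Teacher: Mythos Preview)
Your proposal is correct and follows essentially the same approach as the paper: specialize ${\bf R}_k$ and ${\bf\Psi}_k$ under the stated assumptions to $\beta_{2,k}{\bf I}_M+\beta_{{\rm I},k}{\bf H}_1{\bf H}_1^H$ and $\tau_u\rho{\bf R}_k+\sigma_u^2{\bf I}_M$, invoke ${\bf H}_1{\bf H}_1^H\to\beta_1 N{\bf I}_M$ for $N\gg M$, and reduce the trace ratio to the scalar form. The paper's proof is terser---it simply states the asymptotic property of ${\bf H}_1{\bf H}_1^H$ without your favorable-propagation justification---so your version is, if anything, more complete; one minor cosmetic point is that the explicit LoS structure of $\bar{\bf H}_1$ you appeal to is given in (55) rather than (3).
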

\begin{proof}
	We simplify the corresponding terms in (10) as
	\begin{align}
	&{\rm NMSE}_k=1-\nonumber\\
	&\frac{\tau_u\rho{\rm tr}\left(\left(\beta_{2,k}{\bf I}_M+\beta_{{\rm I},k}{\bf H}_1{\bf H}_1^H\right)\ddot{\bf \Psi}_k^{-1}\left(\beta_{2,k}{\bf I}_M+\beta_{{\rm I},k}{\bf H}_1{\bf H}_1^H\right)\right)}{{\rm tr}\left(\beta_{2,k}{\bf I}_M+\beta_{{\rm I},k}{\bf H}_1{\bf H}_1^H\right)},
	\end{align}
	where $\ddot{\bf \Psi}_k=\tau_u\rho\left(\beta_{2,k}{\bf I}_M+\beta_{{\rm I},k}{\bf H}_1{\bf H}_1^H\right)+\sigma_u^2{\bf I}_M$. Furthermore, we have the property ${\bf H}_1{\bf H}_1^H\rightarrow\beta_1N{\bf I}_M$ when $N\gg M$. By keeping the dominant terms in (13), we obtain the asymptotic expression in (12).
\end{proof}

Corollary~2 shows that NMSE$_k$ is a decreasing function of $N$. When $N\rightarrow\infty$, NMSE$_k$ approaches zero because the RIS boosts the channel gain, leading to a smaller normalized error. On the other hand, the MSE matrix ${\bf C}_k$ reduces to $\frac{1}{\frac{1}{\beta_{2,k}+\beta_{{\rm I},k}\beta_1 N}+\frac{\tau_u\rho}{\sigma_u^2}}{\bf I}_M$ and converges to $\frac{\sigma_u^2}{\tau_u\rho}{\bf I}_M$ as $N\rightarrow\infty$.

\subsection{Downlink Data Transmission}
Since the AN serves as an effective method to enhance the secrecy performance, we consider transmitting the information signal concurrently with the AN at the BS. We denote ${\bf s}=[s_1, \dots ,s_k, \dots ,s_K]^T$ as the information symbols, satisfying $\mathbb E\{{\bf s}{\bf s}^H\}={\bf I}_K$, and let ${\bf z}\sim\mathcal{CN}({\bf 0},{\bf I}_{M-K})$ represent the AN. Then, the information signal and the AN are multiplied by the precoding matrices ${\bf W}\in\mathbb C^{M\times K}$ and ${\bf V}\in\mathbb C^{M\times (M-K)}$, respectively. Accordingly, the transmitted signal ${\bf x}\in\mathbb C^{M\times 1}$ is written as
\begin{align}
	{\bf x}=\ &\sqrt{\frac{\xi P_t}{K}}{\bf Ws}+\sqrt{\frac{(1-\xi) P_t}{M-K}}{\bf Vz},
\end{align}
where $\xi\in(0,1]$ is the power allocation coefficient between ${\bf s}$ and ${\bf z}$, and $P_t$ is the total transmit power. Note that the power of the precoding matrices are constrained by ${\rm tr}({\bf W}{\bf W}^H)=K$ and ${\rm tr}({\bf V}{\bf V}^H)=M-K$. For ease of notation, in the following we define $p\triangleq\frac{\xi P_t}{K}$ and $q\triangleq\frac{(1-\xi) P_t}{M-K}$ .

Then, the received signal at the legitimate user $k$ is given by
\begin{equation}
	y_k={\bf h}_k^H({\bf x}+{\boldsymbol \eta}_t^{\rm BS})+\eta_{r,k}^{\rm UE}+n_k,
\end{equation}
where $n_k\sim\mathcal{CN}(0,\sigma_k^2)$ is the AWGN at the legitimate user~$k$. Additionally, ${\boldsymbol \eta}_t^{\rm BS}\sim\mathcal {CN}({\bf 0}, {\bf \Upsilon}_t^{\rm BS})$ and $\eta_{r,k}^{\rm UE}\sim\mathcal {CN}(0,\mu_{r,k})$ denote the downlink transmitter HWI and the receiver HWI of the legitimate user $k$, respectively, where
\begin{align}
&{\bf \Upsilon}_t^{\rm BS}=\kappa_t^{\rm BS}{\rm diag}\left(T_{11}, \dots ,T_{ii}, \dots ,T_{MM}\right)\ \ \ \ {\text {and}} \nonumber\\
&\mu_{r,k}^{\rm UE}=\kappa_r^{\rm UE}{\bf h}_k^H{\bf T}{\bf h}_k
\end{align}
with ${\bf T}=\mathbb E\{{\bf xx}^H\}$ and $T_{ii}=[{\bf T}]_{ii},i=1, \dots ,M$ [36]. Here, we assume that the parameters $\kappa_t^{\rm BS}$ and $\kappa_r^{\rm UE}$ remain the same for all the users. Combining (14) and (15), the signal received at the legitimate user $k$ is written as
\begin{align}
y_k=\ &\sqrt{p}{\bf h}_k^H{\bf w}_k s_k+\sqrt{p}\sum\nolimits_{i\neq k}^{K}{\bf h}_k^H{\bf w}_i s_i+\sqrt{q}{\bf h}_k^H{\bf Vz}\nonumber\\
&+{\bf h}_k^H{\boldsymbol \eta}_t^{\rm BS}+\eta_{r,k}^{\rm UE}+n_k,
\end{align}
where ${\bf w}_k$ denotes the $k$-th column of the precoding matrix~${\bf W}$.

As the BS does not have access to Eve's capability, we consider the best-case scenario for Eve in terms of useful signal leakage: i.e., high-quality hardware is employed. Therefore, the received signal contains only distortion noise generated by the HWI at the transmitter. In particular, the signal received at Eve is modeled as
\begin{align}
	{\bf y}_{\rm E}=\sqrt{p}{\bf H}_{\rm E}^H{\bf Ws}+\sqrt{q}{\bf H}_{\rm E}^H{\bf Vz}+{\bf H}_{\rm E}^H{\boldsymbol \eta}_t^{\rm BS}+{\bf n}_{\rm E},
\end{align}
where ${\bf H}_{\rm E}={\bf H}_{\rm B,E}+{\bf H}_1{\bf \Phi}{\bf\Theta}{\bf H}_{\rm I,E}$ is the aggregate channel between the BS and Eve and ${\bf n}_{\rm E}\sim\mathcal{CN}({\bf 0},\sigma_{\rm E}^2{\bf I}_{M_{\rm E}})$ is the AWGN at Eve. As a benchmark, we assume that Eve is fully capable of estimating its effective channel and can decode and eliminate the interference from other legitimate users [13], [36]. Hence, an upper bound for the ergodic capacity of Eve can be expressed as [41]
\begin{equation}
	C_{\rm E}=\mathbb E\big\{\log_2(1+\gamma_{\rm E})\big\},
\end{equation}
where ${\gamma_{\rm E}}$ is defined as
\begin{align}
{\gamma_{\rm E}}\triangleq p({\bf f}_{k,{\rm E}})^H\left({\bf H}_{\rm E}^H\left(q{\bf V}{\bf V}^H+{\bf \Upsilon}_t^{\rm BS}\right){\bf H}_{\rm E}\right)^{-1}{\bf f}_{k,{\rm E}},
\end{align}
where ${\bf f}_{k,{\rm E}}={\bf H}_{\rm E}^H{\bf w}_k$.  Considering the passive nature of Eve, it is assumed that the thermal noise at Eve is negligible, i.e., $\sigma_{\rm E}^2\rightarrow 0$.

\section{Achievable Ergodic Secrecy Rate Analysis}
In this section, we utilize the channel statistics of the legitimate users and Eve to determine the achievable ergodic secrecy rate in the presence of HWI. Therefore, the obtained expression depends only on long-term channel coefficients, thus it can be used to optimize the power allocation strategy with low overhead. To this end, we first provide a closed-form formula for the achievable rate of the legitimate users when the AN is employed at the BS. Then, we present an upper bound for the ergodic capacity of Eve. The obtained analytical results are exploited to gain valuable insights on the impact of common design parameters.

\subsection{Achievable Rate of the Legitimate Users}
In order to obtain a clear picture of the role of an RIS for enhancing the secrecy performance, the analysis is conducted by considering a low-complexity maximum ratio transmission (MRT) scheme. Specifically, the MRT is given by ${\bf w}_k =\frac{\hat{\bf h}_k}{\sqrt{\mathbb E\{\Vert \hat{\bf h}_k \Vert^2\}}}$ with $\mathbb E\{\Vert {\bf w}_k\Vert^2\}=1$ and the precoding matrix is expressed as
\begin{align}
{\bf W}=[{\bf w}_1, \dots ,{\bf w}_k, \dots ,{\bf w}_K].
\end{align}
As for the AN signal, the objective is to design ${\bf V}=[{\bf v}_1, \dots , {\bf v}_i, \dots, {\bf v}_{M-K}]$ with $\Vert{\bf v}_i\Vert^2=1$ such that it lies in the null space of $\hat{\bf h}_k$, i.e., $\hat{\bf h}_k^H{\bf V} ={\bf 0}_{M-K}, k=1, \dots ,K$\footnote{From the standpoint of theoretical analysis, $M>K$ is considered to facilitate the design of the AN.}.

For massive MIMO, like in [26] and [37], we take advantage of the channel hardening property. Accordingly, the effective channel of the legitimate users converges to its average value. Then, $y_k$ in (17) is rewritten as
\begin{align}
y_k=\sqrt{p}\mathbb E\{{\bf h}_k^H{\bf w}_k\} s_k+z_k,
\end{align}
where $z_k$ is expressed as
\begin{align}
z_k=\ &\sqrt{p}\left({\bf h}_k^H{\bf w}_k-\mathbb E\{{\bf h}_k^H{\bf w}_k\}\right)s_k+\sqrt{p}\sum\nolimits_{i\neq k}^K{\bf h}_k^H{\bf w}_i s_i\nonumber\\
&+\sqrt{q}{\bf h}_k^H{\bf Vz}+{\bf h}_k^H{\boldsymbol \eta}_t^{\rm BS}+\eta_{r,k}^{\rm UE}+n_k.
\end{align}
Based on (22) and (23), the achievable rate of the legitimate user $k$ is
\begin{equation}
R_k=\log_2(1+\gamma_k),
\end{equation}
\begin{figure*}[ht]
	\begin{equation}
	\gamma_k=\frac{p\left\vert\mathbb E\left\{{\bf h}_k^H{\bf w}_k\right\}\right\vert^2}{p\sum\nolimits_{i\neq k}^K\mathbb E\left\{\left\vert{\bf h}_k^H{\bf w}_i\right\vert^2\right\}+p\mathbb V{\rm ar}\left\{{\bf h}_k^H{\bf w}_k\right\}+q\mathbb E\left\{{\bf h}_k^H{\bf V}{\bf V}^H{\bf h}_k\right\}+\mathbb E\left\{{\bf h}_k^H{\bf\Upsilon}_t^{\rm BS}{\bf h}_k\right\}+\mathbb E\{\mu_{r,k}\}+\sigma_k^2}.
	\end{equation}
		\hrulefill
\end{figure*}where the received SNR $\gamma_k$ is written as (25), shown at the top of the next page. Based on (24), we provide a closed-form expression for the achievable rate of the legitimate user $k$ in the following theorem.

\begin{theorem}
	 The achievable downlink rate of the legitimate user $k$ in the presence of HWI is expressed as follows:
	\begin{equation}
		R_k=\log_2\left(1+S_k/I_k\right),
	\end{equation}
	where
	\begin{align}
		&S_k=p\tau_u\rho{\rm tr}({\bf R}_k{\bf \Psi}_k^{-1}{\bf R}_k),\\
		&I_k=p\sum\nolimits_{i\neq k}^K\frac{{\rm tr}({\bf R}_k{\bf R}_i{\bf \Psi}_i^{-1}{\bf R}_i)}{{\rm tr}({\bf R}_i{\bf \Psi}_i^{-1}{\bf R}_i)}+p\frac{{\rm tr}\left({\bf C}_k{\bf R}_k{\bf \Psi}_k^{-1}{\bf R}_k\right)}{{\rm tr}\left({\bf R}_k{\bf \Psi}_k^{-1}{\bf R}_k\right)}\nonumber\\
		&+q\frac{M-K}{M}{\rm tr}\left({\bf C}_k\right)+\left(\kappa_t^{\rm BS}+\kappa_r^{\rm UE}\right)\frac{P_t}{M}{\rm tr}({\bf R}_k)+\sigma_k^2.
	\end{align}
\end{theorem}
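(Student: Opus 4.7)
The plan is to compute the numerator and each summand of the denominator of (25) separately, relying on the MRT substitution $\mathbf{w}_k=\hat{\mathbf{h}}_k/\sqrt{\mathbb E\{\Vert\hat{\mathbf{h}}_k\Vert^2\}}$ and the decomposition $\mathbf{h}_k=\hat{\mathbf{h}}_k+\mathbf{e}_k$ supplied by Lemma 1. Because every noise source entering (4) is Gaussian, the LMMSE estimator coincides with the MMSE estimator, so $\hat{\mathbf{h}}_k$ and $\mathbf{e}_k$ are actually independent with $\mathbb E\{\hat{\mathbf{h}}_k\hat{\mathbf{h}}_k^H\}=\tau_u\rho\mathbf{R}_k{\boldsymbol\Psi}_k^{-1}\mathbf{R}_k$ and $\mathbb E\{\mathbf{e}_k\mathbf{e}_k^H\}=\mathbf{C}_k$; channels of different users are likewise independent. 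The numerator is then immediate: $\mathbb E\{\mathbf{e}_k^H\mathbf{w}_k\}=0$ by zero-mean independence, so $\mathbb E\{\mathbf{h}_k^H\mathbf{w}_k\}=\mathbb E\{\Vert\hat{\mathbf{h}}_k\Vert^2\}/\sqrt{\mathbb E\{\Vert\hat{\mathbf{h}}_k\Vert^2\}}=\sqrt{\tau_u\rho\,{\rm tr}(\mathbf{R}_k{\boldsymbol\Psi}_k^{-1}\mathbf{R}_k)}$, which gives $S_k$.

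For the denominator I would process the five terms in order. The multi-user interference $p\mathbb E\{\vert\mathbf{h}_k^H\mathbf{w}_i\vert^2\}$ with $i\neq k$ reduces, by user-wise independence, to $p\,{\rm tr}\left(\mathbf{R}_k\mathbb E\{\mathbf{w}_i\mathbf{w}_i^H\}\right)$, and substituting $\mathbb E\{\mathbf{w}_i\mathbf{w}_i^H\}=\mathbb E\{\hat{\mathbf{h}}_i\hat{\mathbf{h}}_i^H\}/\mathbb E\{\Vert\hat{\mathbf{h}}_i\Vert^2\}$ produces the first term of $I_k$. The beamforming-gain variance $p\,\mathbb V{\rm ar}\{\mathbf{h}_k^H\mathbf{w}_k\}$ splits, by independence of $\mathbf{e}_k$ from $\hat{\mathbf{h}}_k$, into $p\,\mathbb V{\rm ar}(\Vert\hat{\mathbf{h}}_k\Vert^2)/\mathbb E\{\Vert\hat{\mathbf{h}}_k\Vert^2\}$ plus $p\,\mathbb E\{\vert\mathbf{e}_k^H\mathbf{w}_k\vert^2\}$; the first piece is discarded by the channel-hardening argument invoked between (22) and (23), while the second, evaluated by conditioning on $\mathbf{w}_k$ and using $\mathbb E\{\mathbf{e}_k\mathbf{e}_k^H\}=\mathbf{C}_k$, matches the second term of $I_k$. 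The AN leakage $q\mathbb E\{\mathbf{h}_k^H\mathbf{V}\mathbf{V}^H\mathbf{h}_k\}$ collapses to $q\mathbb E\{\mathbf{e}_k^H\mathbf{V}\mathbf{V}^H\mathbf{e}_k\}$ because $\hat{\mathbf{h}}_k^H\mathbf{V}=\mathbf{0}$ by construction, and invoking the isotropy identity $\mathbb E\{\mathbf{V}\mathbf{V}^H\}=\frac{M-K}{M}\mathbf{I}_M$ for a uniformly oriented orthonormal null-space basis returns $q\frac{M-K}{M}{\rm tr}(\mathbf{C}_k)$. Finally, writing $\mathbf{T}=p\mathbf{W}\mathbf{W}^H+q\mathbf{V}\mathbf{V}^H$ and applying the same isotropy idea together with $\mathbb E\{\mathbf{W}\mathbf{W}^H\}\approx\frac{K}{M}\mathbf{I}_M$ yields $\mathbb E\{\mathbf{T}\}\approx\frac{P_t}{M}\mathbf{I}_M$, so the transmit and receive distortion contributions each collapse to $\kappa\frac{P_t}{M}{\rm tr}(\mathbf{R}_k)$ and combine into the $(\kappa_t^{\rm BS}+\kappa_r^{\rm UE})$ term of $I_k$. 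Appending $\sigma_k^2$ completes the expression.

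The routine part is trace algebra; the real subtleties are the two approximations on which the clean closed form rests, namely discarding the beamforming-gain variance through channel hardening, and replacing $\mathbb E\{\mathbf{V}\mathbf{V}^H\}$ and $\mathbb E\{\mathbf{W}\mathbf{W}^H\}$ by their isotropic $1/M$-scaled surrogates in spite of the spatial correlation. I would argue both at large $M$: the hardening ratio $\mathbb V{\rm ar}(\Vert\hat{\mathbf{h}}_k\Vert^2)/(\mathbb E\{\Vert\hat{\mathbf{h}}_k\Vert^2\})^2$ decays like $1/M$ for well-conditioned covariance, while the diagonal-plus-isotropy simplifications are the standard ones used in the massive-MIMO HWI literature (e.g., [36], [37]). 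Once these approximations are in place, collecting the four contributions in the denominator under a single logarithm delivers (26)--(28).
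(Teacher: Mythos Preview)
Your proposal is correct and follows essentially the same route as the paper's Appendix~B: each term of (25) is handled via the decomposition $\mathbf{h}_k=\hat{\mathbf{h}}_k+\mathbf{e}_k$, the null-space identity $\hat{\mathbf{h}}_k^H\mathbf{V}=\mathbf{0}$, the isotropy surrogate $\mathbb{E}\{\mathbf{V}\mathbf{V}^H\}\approx\frac{M-K}{M}\mathbf{I}_M$, and the channel-hardening step that drops $\mathbb{V}{\rm ar}(\Vert\hat{\mathbf{h}}_k\Vert^2)$ (which the paper does implicitly in (63)). The only slip is your justification of independence: because the distortion terms in (5) involve \emph{products} of Gaussians, $\mathbf{y}_{p,k}$ is not jointly Gaussian with $\mathbf{h}_k$ and LMMSE $\neq$ MMSE in general --- the paper itself states after (10) that $\hat{\mathbf{h}}_k$ and $\mathbf{e}_k$ are only uncorrelated, yet in Appendix~B it too treats them as independent as a working approximation, so your calculations coincide with the paper's regardless.
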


\begin{proof}
	See Appendix B.
\end{proof}

It can be seen that an intricate expression combined with multiple correlation matrices determines the achievable rate of the legitimate user $k$, which relies on the statistical parameters of the legitimate users. We observe that the signal power increases linearly with $M$ while the interference power $\frac{{\rm tr}({\bf R}_k{\bf R}_i{\bf \Psi}_i^{-1}{\bf R}_i)}{{\rm tr}({\bf R}_i{\bf \Psi}_i^{-1}{\bf R}_i)}$ remains roughly constant with varying $M$. The expression in (28) includes the multiuser interference, AN, and HWI that degrade the achievable rate. Besides, the interference power is evaluated by the similarity of the spatial correlation matrices ${\bf R}_k$ and ${\bf R}_i$: It decreases when the spatial correlation among legitimate users becomes more distinct, which is the so-called ``favorable propagation conditions'' [42].

\subsection{Upper Bounded Capacity at Eve}
Considering the worst-case situation discussed in Section~III, the following theorem yields a tractable tight upper bound for (19).

\begin{theorem}
	 For $M\rightarrow\infty$ and assuming the null-space AN precoding, the ergodic capacity of Eve in (19) is upper bounded as
	\begin{equation}
	C_{\rm E}\leq\overline C_{\rm E}=\log_2\left(1+S_{\rm E}/I_{\rm E}\right),
	\end{equation}
	where
	\begin{align}
		S_{\rm E}=\ &pM_{\rm E}M\left[q(M-K)+\kappa_t^{\rm BS}P_t\right]\nonumber\\
		\ \ \ &\times{\rm tr}\left({\bf R}_k{\bf \Psi}_k^{-1}{\bf R}_k{\bf Q}_{\rm E}\right){\rm tr}({\bf Q}_{\rm E}),\\
		I_{\rm E}=\ &\chi{\rm tr}({\bf R}_k{\bf \Psi}_k^{-1}{\bf R}_k),
	\end{align}
	with the parameter
	\begin{align}
		\chi=\ &\left(q(M-K)+\kappa_t^{\rm BS}P_t\right)^2\left[{\rm tr}({\bf Q}_{\rm E})\right]^2-M_{\rm E}\left[\left(\kappa_t^{\rm BS}P_t\right)^2   \right.\nonumber\\
		&\left. +q^2 M(M-K) +2q (M-K)\kappa_t^{\rm BS}P_t\right]
		{\rm tr}\left({\bf Q}_{\rm E}^2\right).
	\end{align}
	Here, ${\bf Q}_{\rm E}={\bf R}_{\rm B,E}+{\bf H}_1{\bf \Phi}\widetilde{\bf R}_{\rm I, E}{\bf \Phi}^H{\bf H}_1^H$ represents the covariance matrix of the aggregate channel of Eve with $\widetilde{\bf R}_{\rm I, E}=\varrho^2{\bf R}_{\rm I,E}+\beta_{\rm I,E}(1-\varrho^2){\bf I}_N$.
\end{theorem}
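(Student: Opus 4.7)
The plan is to convert the expectation of the log into the log of an expectation through Jensen's inequality: since $\log_2(1+x)$ is concave, $C_{\rm E}=\mathbb{E}\{\log_2(1+\gamma_{\rm E})\}\leq \log_2\bigl(1+\mathbb{E}\{\gamma_{\rm E}\}\bigr)$, so the task reduces to evaluating the scalar $\mathbb{E}\{\gamma_{\rm E}\}=p\,\mathbb{E}\{{\bf w}_k^H{\bf H}_{\rm E}({\bf H}_{\rm E}^H{\bf U}{\bf H}_{\rm E})^{-1}{\bf H}_{\rm E}^H{\bf w}_k\}$ with ${\bf U}\triangleq q{\bf V}{\bf V}^H+{\bf\Upsilon}_t^{\rm BS}$. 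The key structural observation is that Eve's small-scale fading (the Gaussian factor of ${\bf H}_{\rm E}$) is statistically independent of everything inside ${\bf U}$ and ${\bf w}_k$, since the latter depend only on the legitimate-user channel estimates and on pilot/AWGN noise, so I can condition on $({\bf w}_k,{\bf V},{\bf\Upsilon}_t^{\rm BS})$ and take the inner expectation over ${\bf H}_{\rm E}$ first.

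The inner expectation I would handle in the massive-MIMO regime $M\to\infty$ with two random-matrix tools. First, the trace lemma applied columnwise to ${\bf H}_{\rm E}={\bf Q}_{\rm E}^{1/2}{\bf G}_{\rm E}$ shows that the off-diagonal entries of ${\bf H}_{\rm E}^H{\bf U}{\bf H}_{\rm E}$ vanish while each diagonal entry concentrates around ${\rm tr}({\bf U}{\bf Q}_{\rm E})$; since ${\rm tr}({\bf V}{\bf V}^H{\bf Q}_{\rm E})\approx\frac{M-K}{M}{\rm tr}({\bf Q}_{\rm E})$ by rotational invariance of the null-space projector and ${\rm tr}({\bf\Upsilon}_t^{\rm BS}{\bf Q}_{\rm E})\approx\frac{\kappa_t^{\rm BS}P_t}{M}{\rm tr}({\bf Q}_{\rm E})$, the leading-order Gram matrix is a multiple of ${\bf I}_{M_{\rm E}}$ whose scale produces the $[q(M-K)+\kappa_t^{\rm BS}P_t]^2[{\rm tr}({\bf Q}_{\rm E})]^2$ piece of $\chi$. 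Second, the Gaussian fourth-order moments (Isserlis/Wick) of ${\bf H}_{\rm E}^H{\bf U}{\bf H}_{\rm E}$ produce the ${\rm tr}({\bf Q}_{\rm E}^2)$-proportional correction in $\chi$, whose $M_{\rm E}$-prefactor and the coefficients $q^2M(M-K)$, $2q(M-K)\kappa_t^{\rm BS}P_t$, and $(\kappa_t^{\rm BS}P_t)^2$ arise from expanding the square of ${\bf H}_{\rm E}^H{\bf U}{\bf H}_{\rm E}$ separately in the $q{\bf V}{\bf V}^H$ and ${\bf\Upsilon}_t^{\rm BS}$ pieces together with their cross-term. The outer average over $\hat{\bf h}_k$ is then immediate from Lemma~1 and the MRT normalization ${\bf w}_k=\hat{\bf h}_k/\sqrt{\mathbb{E}\{\|\hat{\bf h}_k\|^2\}}$, giving $\mathbb{E}\{{\bf w}_k{\bf w}_k^H\}={\bf R}_k{\bf\Psi}_k^{-1}{\bf R}_k/{\rm tr}({\bf R}_k{\bf\Psi}_k^{-1}{\bf R}_k)$, which in turn supplies ${\rm tr}({\bf R}_k{\bf\Psi}_k^{-1}{\bf R}_k{\bf Q}_{\rm E})$ in $S_{\rm E}$ and ${\rm tr}({\bf R}_k{\bf\Psi}_k^{-1}{\bf R}_k)$ in $I_{\rm E}$.

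The principal technical difficulty I anticipate is that ${\bf f}_{k,{\rm E}}={\bf H}_{\rm E}^H{\bf w}_k$ and the inverted Gram matrix $({\bf H}_{\rm E}^H{\bf U}{\bf H}_{\rm E})^{-1}$ share the same random ${\bf H}_{\rm E}$, so they cannot be decoupled by independence alone, and a naive deterministic-equivalent substitution does not obviously preserve the ``$\leq$'' direction inherited from Jensen's inequality. I would address this by combining the asymptotic concentration of the Gram matrix (whose fluctuations are of lower order in $M$) with the Gaussian quartic identity to obtain a closed-form expression for $\mathbb{E}\{\gamma_{\rm E}\}$ that is exact to leading order in $M$; verifying that the residual error has the correct sign so that the stated $\leq$ remains valid—and keeping meticulous track of the three cross-terms that together assemble $\chi$—is where the bulk of the algebraic work will sit.
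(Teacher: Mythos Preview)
Your proposal and the paper share the same skeleton: apply Jensen's inequality to pull the expectation inside the logarithm, argue that ${\bf X}={\bf H}_{\rm E}^H(q{\bf V}{\bf V}^H+{\bf\Upsilon}_t^{\rm BS}){\bf H}_{\rm E}$ becomes asymptotically deterministic so that ${\bf f}_{k,{\rm E}}$ decouples from ${\bf X}^{-1}$, and finish by computing $\mathbb{E}\{{\bf w}_k^H{\bf Q}_{\rm E}{\bf w}_k\}$ from the LMMSE second moment $\mathbb{E}\{\hat{\bf h}_k\hat{\bf h}_k^H\}=\tau_u\rho{\bf R}_k{\bf\Psi}_k^{-1}{\bf R}_k$. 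Where you diverge is in how you obtain $\mathbb{E}\{{\bf X}^{-1}\}$. You propose a direct asymptotic route---trace-lemma concentration for the leading $[{\rm tr}({\bf Q}_{\rm E})]^2$ term, then an Isserlis/Wick fourth-moment expansion to capture the ${\rm tr}({\bf Q}_{\rm E}^2)$ correction. The paper instead (i) replaces ${\bf\Upsilon}_t^{\rm BS}$ by its deterministic limit $\frac{\kappa_t^{\rm BS}P_t}{M}{\bf I}_M$, (ii) completes ${\bf V}$ to a unitary matrix $[{\bf V}\ {\bf V}_0]$ so that ${\bf X}$ splits into two pieces each with a scaled-Wishart structure after eigendecomposing ${\bf Q}_{\rm E}$, (iii) moment-matches the sum to a single scaled Wishart $\varphi\,\mathcal{W}_{M_{\rm E}}(\eta,{\bf I}_{M_{\rm E}})$, and (iv) invokes the closed-form inverse-Wishart mean $\mathbb{E}\{{\bf X}^{-1}\}=\frac{1}{\varphi(\eta-M_{\rm E})}{\bf I}_{M_{\rm E}}$. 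The quantity $\chi$ then drops out of $\varphi(\eta-M_{\rm E})$ purely algebraically, with the $-M_{\rm E}\,{\rm tr}({\bf Q}_{\rm E}^2)$ piece arising from the Wishart degrees-of-freedom shift rather than from a truncated perturbation series. Your route should land on the same leading-order expression, but the paper's moment-matched Wishart device avoids the bookkeeping of a Neumann-type expansion of the inverse and sidesteps the residual-sign issue you flag, since the inverse-Wishart mean is exact for the surrogate distribution rather than a truncated series whose remainder must be controlled.
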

\begin{proof}
See Appendix C.
\end{proof}

We observe from (29) that the capacity of Eve, as expected, increases with the number of $M_{\rm E}$. If no AN is utilized at the transmitter, the upper bound for $C_{\rm E}$ in (29) simplifies to
\begin{equation}
\overline C_{\rm E}\big\vert_{q=0}=\log_2\left(1+\frac{M_{\rm E} M{\rm tr}\left({\bf R}_k{\bf \Psi}_k^{-1}{\bf R}_k{\bf Q}_{\rm E}\right){\rm tr}({\bf Q}_{\rm E})}{\kappa_t^{\rm BS}\zeta K \big(\left[{\rm tr}({\bf Q}_{\rm E})\right]^2-M_{\rm E}{\rm tr}\left({\bf Q}_{\rm E}^2\right)\big)}\right),
\end{equation}
where $\zeta = {\rm tr}({\bf R}_k{\bf \Psi}_k^{-1}{\bf R}_k)$. Specifically, $\overline C_{\rm E}\rightarrow\infty$ holds true when $\kappa_t^{\rm BS}\rightarrow 0$, which implies that secure communications are not possible.

\subsection{Ergodic Secrecy Rate Analysis}
In this section, we investigate theoretical lower bounds for the ergodic secrecy rate. By virtue of [36], the ergodic secrecy rate is given by
\begin{equation}
	R_{\rm sec}=[R_k-\overline C_{\rm E}]^+,
\end{equation}
where $[a]^+={\rm max}\{0, a\}$. The ergodic secrecy rate may be readily calculated in closed form by substituting (26) and (29) into (34). We notice that the ergodic secrecy rate is related to the RIS phase shift matrix in a rather complex way. However, the derived expression can be utilized for optimizing the RIS phase shifts, by only relying on long-term channel statistics. As the ergodic secrecy rate depends on the correlation matrices, it is difficult to quantitatively inspect their impact on the ergodic secrecy rate. This will be examined and discussed in the simulation results.

For the special case without the AN, the ergodic secrecy rate of the legitimate user $k$ simplifies to (35), shown at the top of the next page, where
\begin{figure*}[ht]
\begin{equation}
R_{\rm sec}\big\vert_{q=0}=\left[\log_2\left(1+\frac{P_t \ddot S_k /K }{P_t\ddot I_k /K+\ddot N_k}\right)-\log_2\left(1+\frac{\delta M^2{\rm tr}\left({\bf R}_k{\bf \Psi}_k^{-1}{\bf R}_k{\bf Q}_{\rm E}\right){\rm tr}({\bf Q}_{\rm E})}{\kappa_t^{\rm BS}\zeta K\left[{\rm tr}({\bf Q}_{\rm E})\right]^2-\delta \kappa_t^{\rm BS}\zeta MK{\rm tr}\left({\bf Q}_{\rm E}^2\right)}\right)\right]^+,
\end{equation}
\hrulefill
\end{figure*}
\begin{align}
\ddot S_k =\ &\tau_u\rho{\rm tr}({\bf R}_k{\bf \Psi}_k^{-1}{\bf R}_k),\\
\ddot{I_k}=\ &\sum\nolimits_{i\neq k}^K\frac{{\rm tr}({\bf R}_k{\bf R}_i{\bf \Psi}_i^{-1}{\bf R}_i)}{{\rm tr}({\bf R}_i{\bf \Psi}_i^{-1}{\bf R}_i)}+\frac{{\rm tr}\left({\bf C}_k{\bf R}_k{\bf \Psi}_k^{-1}{\bf R}_k\right)}{{\rm tr}\left({\bf R}_k{\bf \Psi}_k^{-1}{\bf R}_k\right)},\\
\ddot N_k=\ &\left(\kappa_t^{\rm BS}+\kappa_r^{\rm UE}\right)P_t{\rm tr}({\bf R}_k)/M+\sigma_k^2,
\end{align}
and $\delta=M_{\rm E}/M$ denotes the normalized number of Eve's antennas. Notice that $R_{\rm sec}\big\vert_{q=0}$ in (35) decreases with $\delta$, and then the following proposition is provided by setting (35) to zero.
\begin{proposition}
In the absence of AN, we can compute the largest number of Eve's antennas $M_{\rm E}=\lfloor\delta_{\rm AN}M\rfloor$ for preserving a non-zero secrecy rate, where
\begin{align}
\delta_{\rm AN}=\frac{\ddot S_k\kappa_t^{\rm BS}\frac{K}{M}{\rm tr}({\bf Q}_{\rm E})}{\kappa_t^{\rm BS}\ddot S_k\frac{K{\rm tr}\left({\bf Q}_{\rm E}^2\right)}{{\rm tr}({\bf Q}_{\rm E})}+\frac{M}{\zeta}(\ddot I_k +\frac{K\ddot N_k}{P_t}) {\rm tr}\left({\bf R}_k{\bf \Psi}_k^{-1}{\bf R}_k{\bf Q}_{\rm E}\right)}.
\end{align}
\end{proposition}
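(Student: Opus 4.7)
The plan is to work directly from the closed form in (35), set the secrecy rate to zero, and solve for the normalized number of Eve antennas $\delta$. Because the first logarithm in (35) does not depend on $\delta$, and the second is strictly increasing in $\delta$ on the range where its denominator is positive (the numerator grows linearly in $\delta$ while the denominator decreases linearly in $\delta$), the map $\delta \mapsto R_{\rm sec}|_{q=0}(\delta)$ is strictly decreasing wherever the rate is nonnegative. Hence there is a unique threshold $\delta_{\rm AN}$ at which the secrecy rate vanishes, and $M_{\rm E} < \delta_{\rm AN} M$ is necessary and sufficient for a positive rate; since $M_{\rm E}$ is an integer, the largest admissible value is $\lfloor \delta_{\rm AN} M \rfloor$.

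For the algebra, I would first abbreviate $A \triangleq \ddot S_k P_t / (P_t \ddot I_k + K \ddot N_k)$, i.e., the SINR of the legitimate user in (35), so that the first logarithm is $\log_2(1+A)$. Setting the two $\log_2(\cdot)$ terms equal and using monotonicity of $\log_2$, I can cancel the common "$1+$" and equate
\begin{equation*}
A \;=\; \frac{\delta M^{2} \,{\rm tr}({\bf R}_k{\bf \Psi}_k^{-1}{\bf R}_k{\bf Q}_{\rm E})\,{\rm tr}({\bf Q}_{\rm E})}{\kappa_t^{\rm BS}\zeta K [{\rm tr}({\bf Q}_{\rm E})]^{2} - \delta\,\kappa_t^{\rm BS}\zeta MK\,{\rm tr}({\bf Q}_{\rm E}^{2})}.
\end{equation*}
Cross-multiplying clears the $\delta$ from the denominator and yields an affine equation in $\delta$; isolating $\delta$ on the left gives
\begin{equation*}
\delta \;=\; \frac{A\,\kappa_t^{\rm BS}\zeta K [{\rm tr}({\bf Q}_{\rm E})]^{2}}{M^{2}\,{\rm tr}({\bf R}_k{\bf \Psi}_k^{-1}{\bf R}_k{\bf Q}_{\rm E})\,{\rm tr}({\bf Q}_{\rm E}) + A\,\kappa_t^{\rm BS}\zeta MK\,{\rm tr}({\bf Q}_{\rm E}^{2})}.
\end{equation*}
Substituting $A = \ddot S_k P_t / (P_t \ddot I_k + K \ddot N_k)$, multiplying numerator and denominator by $(P_t \ddot I_k + K \ddot N_k)/P_t = \ddot I_k + K \ddot N_k / P_t$, and then dividing both by $M\,\zeta\,{\rm tr}({\bf Q}_{\rm E})$ to normalize, I would recover the expression in (39).

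The computation is essentially routine bookkeeping, so the main obstacles are not analytic but rather (i) verifying that the denominator of (39) is positive so that $\delta_{\rm AN}$ is well-defined, which follows from the Cauchy--Schwarz bound $[{\rm tr}({\bf Q}_{\rm E})]^{2} \leq M\,{\rm tr}({\bf Q}_{\rm E}^{2})$ together with $\delta \leq 1$, and (ii) justifying that the equivalence between equating arguments and equating the logs is valid, i.e., that we remain in the regime where the SINR of Eve is finite and positive. Once these are in place, taking the floor to return an integer number of Eve antennas completes the proof.
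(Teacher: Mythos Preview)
Your proposal is correct and follows exactly the approach the paper takes: the text preceding Proposition~1 simply states that $R_{\rm sec}|_{q=0}$ in (35) is decreasing in $\delta$ and that the proposition is obtained ``by setting (35) to zero,'' which is precisely the monotonicity-plus-root-finding argument you outline (with your algebra filling in the steps the paper omits). One small quibble: the denominator of (39) is a sum of manifestly positive terms, so no Cauchy--Schwarz argument is needed for well-definedness, and in fact the inequality $[{\rm tr}({\bf Q}_{\rm E})]^{2}\le M\,{\rm tr}({\bf Q}_{\rm E}^{2})$ goes the wrong way for bounding the Eve-SINR denominator in (35).
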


This result shows that the presence of the transmit distortion noise is crucial for preserving a non-zero secrecy rate, especially when the transmitted AN power is zero, since $\kappa_t^{\rm BS}=0$ results in $\delta_{\rm AN}=0$.

For the general case in the presence of AN, the largest number of Eve's antennas that are allowed to preserve a positive secrecy rate is examined in the following proposition. In this regard, the ergodic secrecy rate in (34) is rewritten as (40), shown at the top of the next page, where
\begin{figure*}[ht]
	\begin{equation}
R_{\rm sec}=\left[\log_2\left(1+\frac{\xi \ddot S_k}{\xi\big(\ddot I_k+\frac{K}{M}{\rm tr}({\bf C}_k)\big)+\ddot D_k}\right)-\log_2\left(1+\frac{\xi \delta M^2{\rm tr}\left({\bf R}_k{\bf \Psi}_k^{-1}{\bf R}_k{\bf Q}_{\rm E}\right){\rm tr}({\bf Q}_{\rm E})/{\rm tr}({\bf R}_k{\bf \Psi}_k^{-1}{\bf R}_k)}{K\left(1-\xi+\kappa_t^{\rm BS}\right)\left[{\rm tr}({\bf Q}_{\rm E})\right]^2-\delta KM\ddot\chi{\rm tr}\left({\bf Q}_{\rm E}^2\right)}\right)\right]^+,
	\end{equation}
		\hrulefill
\end{figure*}
\begin{align}
\ddot D_k=\ &\frac{K}{M}\left({\rm tr}\left({\bf C}_k\right)+\left(\kappa_t^{\rm BS}+\kappa_r^{\rm UE}\right){\rm tr}({\bf R}_k)\right)+\frac{\sigma_k^2K}{P_t},\\
\ddot\chi=\ &\frac{\frac{(1-\xi)^2}{M-K}+2(1-\xi)\kappa_t^{\rm BS}+\left(\kappa_t^{\rm BS}\right)^2}{1-\xi+\kappa_t^{\rm BS}}.
\end{align}

\begin{proposition}
If Eve possesses less than $M_{\rm E}=\lfloor\delta_{\rm sec}M\rfloor$ antennas, then the legitimate user $k$ can leverage the AN to preserve a non-zero secrecy rate, where
\begin{align}
\delta_{\rm sec}=\frac{\ddot S_k K(1+\kappa_t^{\rm BS})\left[{\rm tr}({\bf Q}_{\rm E})\right]^2}{M^2\lambda_k\ddot D_k+\ddot S_k KM\frac{\frac{1}{M-K}+2\kappa_t^{\rm BS}+\left(\kappa_t^{\rm BS}\right)^2}{1+\kappa_t^{\rm BS}}{\rm tr}\left({\bf Q}_{\rm E}^2\right)}
\end{align}
and $\lambda_k=\frac{{\rm tr}\left({\bf R}_k{\bf \Psi}_k^{-1}{\bf R}_k{\bf Q}_{\rm E}\right)}{{\rm tr}({\bf R}_k{\bf \Psi}_k^{-1}{\bf R}_k)}{\rm tr}({\bf Q}_{\rm E})$.
\end{proposition}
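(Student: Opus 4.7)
My plan is to start from the explicit expression (40) for the ergodic secrecy rate and convert the requirement $R_{\rm sec}>0$ into a constraint on $\delta=M_{\rm E}/M$. Since $\log_2(1+x)>\log_2(1+y)$ is equivalent to $x>y$, positivity reduces to comparing the legitimate SINR and Eve's upper-bound SINR inside the two logarithms of (40). Because both numerators carry a common factor $\xi$ that cancels for $\xi>0$, the positivity of $R_{\rm sec}$ becomes a single inequality linking $\delta$, $\xi$, and the fixed channel statistics.

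The next step is to cross-multiply by the two (positive) denominators, collect the $\delta$-terms on one side, and isolate $\delta$. This yields $\delta<\delta_{\rm sec}(\xi)$ of the form
\begin{equation*}
\delta_{\rm sec}(\xi)=\frac{\ddot S_k K(1-\xi+\kappa_t^{\rm BS})[{\rm tr}({\bf Q}_{\rm E})]^2}{\ddot S_k KM\,\ddot\chi(\xi)\,{\rm tr}({\bf Q}_{\rm E}^2)+M^2\lambda_k\bigl(\xi(\ddot I_k+\tfrac{K}{M}{\rm tr}({\bf C}_k))+\ddot D_k\bigr)}.
\end{equation*}
Since the BS is free to tune the power split $\xi$, a non-zero secrecy rate is attainable whenever $\delta<\sup_{\xi\in(0,1]}\delta_{\rm sec}(\xi)$, and the goal is to identify this supremum with the claimed (43).

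The last step is the limit $\xi\to 0^+$. In that limit $(1-\xi+\kappa_t^{\rm BS})\to 1+\kappa_t^{\rm BS}$, the bracket $\xi(\ddot I_k+\tfrac{K}{M}{\rm tr}({\bf C}_k))$ vanishes, and $\ddot\chi(\xi)\to\ddot\chi(0)=\frac{1/(M-K)+2\kappa_t^{\rm BS}+(\kappa_t^{\rm BS})^2}{1+\kappa_t^{\rm BS}}$, which exactly reproduces (43). The subtle point, and the main obstacle, is that $\xi=0$ itself gives $R_{\rm sec}=0$ trivially, so one cannot extract a sufficient condition by direct substitution; instead I would perform a first-order expansion of (40) around $\xi=0$. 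Using $\log_2(1+x)\approx x/\ln 2$ for small $x$, both rates are $\mathcal{O}(\xi)$, with slopes $\ddot S_k/\ddot D_k$ and $\delta M^2\lambda_k/\bigl(K(1+\kappa_t^{\rm BS})[{\rm tr}({\bf Q}_{\rm E})]^2-\delta KM\ddot\chi(0){\rm tr}({\bf Q}_{\rm E}^2)\bigr)$, respectively. Comparing these two slopes reproduces the inequality $\delta<\delta_{\rm sec}$, so whenever $M_{\rm E}<\lfloor\delta_{\rm sec}M\rfloor$ the slope of $R_k$ strictly exceeds that of $\overline C_{\rm E}$, and continuity of $R_{\rm sec}(\xi)$ together with $R_{\rm sec}(0)=0$ delivers $R_{\rm sec}(\xi)>0$ for every sufficiently small $\xi>0$. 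A final sanity check is that Eve's SINR denominator $K(1-\xi+\kappa_t^{\rm BS})[{\rm tr}({\bf Q}_{\rm E})]^2-\delta KM\ddot\chi(\xi){\rm tr}({\bf Q}_{\rm E}^2)$ remains strictly positive throughout, which is automatic in the regime $\delta<\delta_{\rm sec}(\xi)$ and matches the feasibility condition already used implicitly in Theorem~2.
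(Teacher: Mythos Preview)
Your proposal is correct and follows the route the paper leaves implicit: the paper provides no proof for Proposition~2 beyond displaying (40)--(42), so the intended argument is exactly to equate the two SINRs in (40) and read off the critical $\delta$, and you have correctly recognized that the stated formula (43) is the $\xi\to 0^+$ specialization of the $\xi$-dependent threshold. Your slope-comparison at $\xi=0$ is the clean way to make this rigorous, since $R_{\rm sec}(0)=0$ and both SINRs are $\mathcal{O}(\xi)$; it directly yields the sufficient condition without having to argue that $\sup_{\xi}\delta_{\rm sec}(\xi)$ is actually attained at $\xi\to 0^+$ (a claim you sketch but do not, and need not, prove).
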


It is observed that the largest number of Eve's antennas that may be accepted increases with the accuracy of the LMMSE estimation. Furthermore, $\delta_{\rm sec}$ decreases with the HWI parameters $\kappa_r^{\rm BS}$, $\kappa_r^{\rm UE}$, and $\kappa_t^{\rm UE}$, and increases with $\kappa_t^{\rm BS}$.

\subsection{New Characteristics of Integrating RIS}
The role of RIS for improving the secrecy rate is further analyzed in this subsection. Specifically, we characterize whether the favorable propagation conditions can be satisfied or not by employing a large number of RIS elements $N$ rather than a large number of BS antennas $M$. In order to tackle the challenge of determining the relationship between $R_{\rm sec}$ and $N$, we analyze the asymptotic behavior of $R_{\rm sec}$ in (34) when $M$ and $N$ are large.

\begin{proposition}
Assuming uncorrelated Rayleigh fading and ideal hardware for channel estimation, the ergodic secrecy rate reduces to $R_{\rm sec}=\left[R_k^{({\rm UR})}-\overline C_{\rm E}^{({\rm UR})}\right]^+$, where $R_k^{({\rm UR})}$ and $\overline C_{\rm E}^{({\rm UR})}$ are given in (44) and (45), shown at the top of this page, respectively, with
\begin{figure*}[ht]
	\begin{equation}
	R_k^{({\rm UR})}=\log_2\left(1+\frac{p\tau_u\rho{\rm tr}({\bf\Upsilon}_k)}   {p\tilde I_{k,i}+q\frac{M-K}{M}{\rm tr}\left(\beta_{2,k}{\bf I}_M+\beta_{{\rm I},k}{\bf H}_1{\bf H}_1^H-\tau_u\rho{\bf\Upsilon}_k\right)+\left(\kappa_t^{\rm BS}+\kappa_r^{\rm UE}\right)\frac{P_t}{M} {\rm tr}(\beta_{2,k}{\bf I}_M+\beta_{{\rm I},k}{\bf H}_1{\bf H}_1^H)+\sigma_k^2}\right),
	\end{equation}
	\hrulefill
\end{figure*}
\begin{figure*}[ht]
	\begin{equation}
	\overline C_{\rm E}^{({\rm UR})}=\log_2\left(1+\frac{pM_{\rm E}M\left(q(M-K)+\kappa_t^{\rm BS}P_t\right){\rm tr}(\beta_3{\bf I}_M+\beta_{\rm I,E}{\bf H}_1{\bf H}_1^H){\rm tr}\left((\beta_3{\bf I}_M+\beta_{\rm I,E}{\bf H}_1{\bf H}_1^H){\bf\Upsilon}_k\right)/{\rm tr}\left({\bf\Upsilon}_k\right)}  {\left(q(M-K)+\kappa_t^{\rm BS}P_t\right)^2\left[{\rm tr}(\beta_3{\bf I}_M+\beta_{\rm I,E}{\bf H}_1{\bf H}_1^H)\right]^2-\varpi{\rm tr}\left((\beta_3{\bf I}_M+\beta_{\rm I,E}{\bf H}_1{\bf H}_1^H)^2\right)}\right),
	\end{equation}
	\hrulefill
\end{figure*}
\begin{align}
{\bf\Upsilon}_k=\ &(\beta_{2,k}{\bf I}_M+\beta_{{\rm I},k}{\bf H}_1{\bf H}_1^H)\ddot{\bf \Psi}_k^{-1}(\beta_{2,k}{\bf I}_M+\beta_{{\rm I},k}{\bf H}_1{\bf H}_1^H),\\
\tilde I_{k,i}=\ &\sum\nolimits_{i\neq k}^K\frac{{\rm tr}\left((\beta_{2,k}{\bf I}_M+\beta_{{\rm I},k}{\bf H}_1{\bf H}_1^H){\bf\Upsilon}_i\right)}{{\rm tr}({\bf\Upsilon}_i)}\nonumber\\
&+\frac{{\rm tr}\left((\beta_{2,k}{\bf I}_M+\beta_{{\rm I},k}{\bf H}_1{\bf H}_1^H-\tau_u\rho{\bf\Upsilon}_k){\bf\Upsilon}_k\right)}{{\rm tr}\left({\bf\Upsilon}_k\right)},
\end{align}
and $\varpi=M_{\rm E}\left(\kappa_t^{\rm BS}P_t\right)^2+q^2M_{\rm E}M(M-K)+2qM_{\rm E}(M-K)\kappa_t^{\rm BS}P_t$.
\end{proposition}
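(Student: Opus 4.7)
The plan is to obtain Proposition~3 by direct specialization of Theorem~1 and Theorem~2 under the stipulated hypotheses, with no new probabilistic analysis required. First I would translate the two hypotheses into component-level identities: ``uncorrelated Rayleigh fading'' gives $\mathbf{R}_{\mathrm{B}}=\mathbf{I}_M$ and $\mathbf{R}_{\mathrm{I}}=\mathbf{I}_N$, so $\mathbf{R}_{\mathrm{B},k}=\beta_{2,k}\mathbf{I}_M$, $\mathbf{R}_{\mathrm{I},k}=\beta_{\mathrm{I},k}\mathbf{I}_N$, $\mathbf{R}_{\mathrm{B,E}}=\beta_3\mathbf{I}_M$, and $\mathbf{R}_{\mathrm{I,E}}=\beta_{\mathrm{I,E}}\mathbf{I}_N$, while ``ideal hardware for channel estimation'' sets $\kappa_t^{\mathrm{UE}}=\kappa_r^{\mathrm{BS}}=0$. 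The downlink transceiver distortions $\kappa_t^{\mathrm{BS}}$ and $\kappa_r^{\mathrm{UE}}$ are retained, since they act during data transmission and are visible in both (44) and (45).

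Second, I would collapse the three building blocks that appear throughout Theorem~1 and Theorem~2. Plugging $\mathbf{R}_{\mathrm{I},k}=\beta_{\mathrm{I},k}\mathbf{I}_N$ into (9) yields $\widetilde{\mathbf{R}}_{\mathrm{I},k}=\beta_{\mathrm{I},k}\mathbf{I}_N$ \emph{independently of $\varrho$}, which is the one nontrivial observation in the proof: isotropy at the RIS kills the dependence on the phase-shift noise statistics. Using $\boldsymbol{\Phi}\boldsymbol{\Phi}^H=\mathbf{I}_N$ in (8) then gives $\mathbf{R}_k=\beta_{2,k}\mathbf{I}_M+\beta_{\mathrm{I},k}\mathbf{H}_1\mathbf{H}_1^H$ and, by the same route, $\mathbf{Q}_{\mathrm{E}}=\beta_3\mathbf{I}_M+\beta_{\mathrm{I,E}}\mathbf{H}_1\mathbf{H}_1^H$. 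Feeding these together with $\kappa_t^{\mathrm{UE}}=\kappa_r^{\mathrm{BS}}=0$ into (7) collapses $\boldsymbol{\Psi}_k$ to the $\ddot{\boldsymbol{\Psi}}_k$ defined under (13), so that $\mathbf{R}_k\boldsymbol{\Psi}_k^{-1}\mathbf{R}_k=\boldsymbol{\Upsilon}_k$ matches (46) and $\mathbf{C}_k=\mathbf{R}_k-\tau_u\rho\,\boldsymbol{\Upsilon}_k$.

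Third, I would substitute these simplified objects into (27)--(28) and (30)--(32). In the legitimate-user rate, $\mathrm{tr}(\mathbf{R}_k\boldsymbol{\Psi}_k^{-1}\mathbf{R}_k)\to\mathrm{tr}(\boldsymbol{\Upsilon}_k)$ produces $p\tau_u\rho\,\mathrm{tr}(\boldsymbol{\Upsilon}_k)$ in the numerator; the multiuser term becomes $\mathrm{tr}\bigl((\beta_{2,k}\mathbf{I}_M+\beta_{\mathrm{I},k}\mathbf{H}_1\mathbf{H}_1^H)\boldsymbol{\Upsilon}_i\bigr)/\mathrm{tr}(\boldsymbol{\Upsilon}_i)$; the residual self-interference becomes $\mathrm{tr}\bigl((\beta_{2,k}\mathbf{I}_M+\beta_{\mathrm{I},k}\mathbf{H}_1\mathbf{H}_1^H-\tau_u\rho\,\boldsymbol{\Upsilon}_k)\boldsymbol{\Upsilon}_k\bigr)/\mathrm{tr}(\boldsymbol{\Upsilon}_k)$; $\mathrm{tr}(\mathbf{C}_k)\to\mathrm{tr}(\beta_{2,k}\mathbf{I}_M+\beta_{\mathrm{I},k}\mathbf{H}_1\mathbf{H}_1^H-\tau_u\rho\,\boldsymbol{\Upsilon}_k)$; and the HWI term carries $\mathrm{tr}(\beta_{2,k}\mathbf{I}_M+\beta_{\mathrm{I},k}\mathbf{H}_1\mathbf{H}_1^H)$. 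Assembling these pieces reconstructs (44). For Eve, the same substitution into (30)--(32), after factoring the common $M_{\mathrm{E}}$ out of the three distortion-related terms of $\chi$, delivers (45) with $\varpi=M_{\mathrm{E}}\bigl[(\kappa_t^{\mathrm{BS}}P_t)^2+q^2 M(M-K)+2q(M-K)\kappa_t^{\mathrm{BS}}P_t\bigr]$.

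The main obstacle is purely bookkeeping: one must keep the quadratic coefficients inside $\varpi$ (in particular the $q^2 M(M-K)$ factor, not $q^2(M-K)$) separated from the outer factor $(q(M-K)+\kappa_t^{\mathrm{BS}}P_t)^2$ multiplying $[\mathrm{tr}(\mathbf{Q}_{\mathrm{E}})]^2$ in the denominator, and track which instances of $\mathbf{R}_k$ in (28) survive after the $\mathbf{R}_k\boldsymbol{\Psi}_k^{-1}\mathbf{R}_k\to\boldsymbol{\Upsilon}_k$ collapse and which remain as $\beta_{2,k}\mathbf{I}_M+\beta_{\mathrm{I},k}\mathbf{H}_1\mathbf{H}_1^H$. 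No new limit arguments, large-dimension approximations, or random-matrix identities are needed beyond what Theorems~1 and~2 already supply.
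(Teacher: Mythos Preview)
Your proposal is correct and follows exactly the paper's own approach: the paper's proof is a single sentence stating that Proposition~3 is obtained by setting $\mathbf{R}_{\mathrm{I}}=\mathbf{I}_N$, $\mathbf{R}_{\mathrm{B}}=\mathbf{I}_M$, and $\kappa_t^{\mathrm{UE}}=\kappa_r^{\mathrm{BS}}=0$ in Theorems~1 and~2. Your write-up simply supplies the bookkeeping details (including the nice remark that $\widetilde{\mathbf{R}}_{\mathrm{I},k}$ collapses to $\beta_{\mathrm{I},k}\mathbf{I}_N$ independently of $\varrho$, which underlies the paper's subsequent comment that $\boldsymbol{\Phi}$ disappears from the rate expressions).
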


The proof of Proposition 3 is obtained by setting ${\bf R}_{{\rm I}}={\bf I}_N$, ${\bf R}_{\rm B}={\bf I}_M$, and $\kappa_t^{\rm UE}=\kappa_r^{\rm BS}=0$ in Theorem~1 and Theorem~2. We see that the phase-shift matrix $\bf \Phi$ does not appear in $R_k^{({\rm UR})}$ and $\overline C_{\rm E}^{({\rm UR})}$. Thus, under spatially uncorrelated channels, a low-cost RIS with quantized phase shifts can be employed without hindering the ergodic secrecy rate. Moreover, it is noticed that the power of information signal in (44) asymptotically scales as $\mathcal O(M)$, while the AN power and the HWI power scale as $\mathcal O(1)$. The inter-user interference cannot be ignored when $M\rightarrow\infty$. This is because the asymptotic orthogonality among the legitimate users does not hold anymore due to the common component ${\bf H}_1$ of the cascaded channels.

Based on Proposition 3, we examine the benefits brought by a large number of RIS elements, i.e., in the asymptotic regime $N\rightarrow\infty$, which is provided in the following corollary.

\begin{corollary}
When $N\gg M$, ${\bf H}_1{\bf H}_1^H\rightarrow\beta_1N{\bf I}_M$ holds. Then, the ergodic secrecy rate in Proposition 3 simplifies to (48), shown at the top of the next page, where $\Xi_k=K(\beta_{2,k}+\beta_{{\rm I},k}\beta_1 N)-\bar\gamma_k$ and $\bar\gamma_k=\frac{(\beta_{2,k}+\beta_{{\rm I},k}\beta_1N)^2}{\beta_{2,k}+\beta_{{\rm I},k}\beta_1N+\frac{\sigma_u^2}{\tau_u\rho}}$.
\end{corollary}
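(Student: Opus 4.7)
The plan is a direct asymptotic simplification of the expressions for $R_k^{({\rm UR})}$ in (44) and $\overline C_{\rm E}^{({\rm UR})}$ in (45), substituting the limiting identity ${\bf H}_1{\bf H}_1^H \to \beta_1 N {\bf I}_M$ that holds when $N\gg M$ (the same property already invoked in the proof of Corollary~2). Since every matrix appearing in (44)--(47) then becomes a scalar multiple of ${\bf I}_M$, the formulas collapse to elementary scalar forms and (48) follows by careful bookkeeping.

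First, I would reduce the quadratic form in (46). Under the limit we have $\beta_{2,k}{\bf I}_M+\beta_{{\rm I},k}{\bf H}_1{\bf H}_1^H \to (\beta_{2,k}+\beta_{{\rm I},k}\beta_1 N){\bf I}_M$ and $\ddot{\bf\Psi}_k \to [\tau_u\rho(\beta_{2,k}+\beta_{{\rm I},k}\beta_1 N)+\sigma_u^2]{\bf I}_M$. Comparing this with the definition of $\bar\gamma_k$ yields the compact identity ${\bf\Upsilon}_k = (\bar\gamma_k/(\tau_u\rho)){\bf I}_M$, so that $\tau_u\rho\,{\rm tr}({\bf\Upsilon}_k)=M\bar\gamma_k$, which immediately delivers the numerator of the legitimate-user SINR, while the HWI term in (44) collapses to $(\kappa_t^{\rm BS}+\kappa_r^{\rm UE})P_t(\beta_{2,k}+\beta_{{\rm I},k}\beta_1 N)$.

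Next, I would evaluate $\tilde I_{k,i}$ in (47). Each summand of the multiuser-interference part equals $(\beta_{2,k}+\beta_{{\rm I},k}\beta_1 N)$ because the scalar ${\bf\Upsilon}_i$ factors cancel between numerator and denominator, giving $(K-1)(\beta_{2,k}+\beta_{{\rm I},k}\beta_1 N)$ in total. The residual self-interference term uses the identity $\beta_{2,k}+\beta_{{\rm I},k}\beta_1 N - \bar\gamma_k = (\beta_{2,k}+\beta_{{\rm I},k}\beta_1 N)\,\sigma_u^2/[\tau_u\rho(\beta_{2,k}+\beta_{{\rm I},k}\beta_1 N)+\sigma_u^2]$ and evaluates to $(\beta_{2,k}+\beta_{{\rm I},k}\beta_1 N - \bar\gamma_k)$. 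Summing the two contributions produces $\tilde I_{k,i} = K(\beta_{2,k}+\beta_{{\rm I},k}\beta_1 N) - \bar\gamma_k = \Xi_k$, matching the quantity defined in the corollary. The AN term in (44) simplifies analogously to $q(M-K)(\beta_{2,k}+\beta_{{\rm I},k}\beta_1 N - \bar\gamma_k)$.

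Finally, I would repeat the same mechanical substitution for Eve in (45), setting $\alpha_{\rm E}\triangleq \beta_3+\beta_{\rm I,E}\beta_1 N$ so that $\beta_3{\bf I}_M+\beta_{\rm I,E}{\bf H}_1{\bf H}_1^H \to \alpha_{\rm E}{\bf I}_M$. The relevant traces reduce to $M\alpha_{\rm E}$, $M\alpha_{\rm E}^2$ and $M\alpha_{\rm E}\bar\gamma_k/(\tau_u\rho)$; the ratio with ${\rm tr}({\bf\Upsilon}_k)$ in the numerator of (45) cleanly cancels the $\bar\gamma_k$ factor and leaves $M\alpha_{\rm E}^2$, while the denominator factors as $M\alpha_{\rm E}^2[M(q(M-K)+\kappa_t^{\rm BS}P_t)^2-\varpi]$, so the $\alpha_{\rm E}$-dependence in fact drops out of Eve's SINR. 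Substituting the simplified $R_k^{({\rm UR})}$ and $\overline C_{\rm E}^{({\rm UR})}$ into (34) produces (48). The main bookkeeping hurdle is tracking the cancellations that turn the quadratic $(\beta_{2,k}+\beta_{{\rm I},k}\beta_1 N)^2\ddot{\bf\Psi}_k^{-1}$ into the compact $(\bar\gamma_k/(\tau_u\rho)){\bf I}_M$ and that collapse the Eve-side trace ratio into $M\alpha_{\rm E}^2$; once these two identifications are made, the remaining algebra to match the precise form of (48) is purely mechanical.
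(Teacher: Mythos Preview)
Your proposal is correct and follows exactly the route the paper (implicitly) takes: the paper states Corollary~3 without a separate proof, treating it as a direct consequence of Proposition~3 once ${\bf H}_1{\bf H}_1^H\to\beta_1 N{\bf I}_M$ is substituted, and your term-by-term reduction of (44)--(47) carries this out faithfully. Your additional observation that the $\alpha_{\rm E}$-dependence actually cancels from Eve's SINR is correct and slightly sharper than the displayed form of (48), which retains the $(\beta_3+\beta_{\rm I,E}\beta_1 N)^2$ factors explicitly in both numerator and denominator.
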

\begin{figure*}[ht]
	\begin{align}
	R_{\rm sec}=\ &\left[\log_2\left(1+\frac{\xi P_t M \bar\gamma_k /K}{\xi P_t\Xi_k/K + (1-\xi)P_t \left(\beta_{2,k}+\beta_{{\rm I},k}\beta_1 N-\bar\gamma_k\right)+\left(\kappa_t^{\rm BS}+\kappa_r^{\rm UE}\right)P_t(\beta_{2,k}+\beta_{{\rm I},k}\beta_1 N)+\sigma_k^2}\right)\nonumber\right.\\
	&\left.-\log_2\left(1+\frac{pM_{\rm E}M \left(q(M-K)+\kappa_t^{\rm BS}P_t\right)(\beta_3+\beta_{\rm I,E}\beta_1N)^2}{M\left(q(M-K)+\kappa_t^{\rm BS}P_t\right)^2\left(\beta_3+\beta_{\rm I,E}\beta_1N\right)^2-\varpi(\beta_3^2+2\beta_3\beta_{\rm I,E}N+\beta_{\rm I,E}^2\beta_1^2N^2)}\right)\right]^{+},
	\end{align}
	\hrulefill
\end{figure*}

The obtained result shows that the RIS strengthens the channel gain through the parameter $\bar\gamma_k$, however, it results in worse estimation performance through the MSE matrix. As $N\rightarrow\infty$, the channel gain increases without bounds, yet the estimation error converges to $\frac{\sigma_u^2}{\tau_u\rho}$. Besides, we observe that the capacity of Eve remains roughly constant with $M$ and the achievable rate of the legitimate user $k$ increases logarithmically with $M$. This reveals that increasing $N$ significantly enhances the ergodic secrecy rate. Therefore, Corollary~3 demonstrates that under HWI and imperfect CSI, the ergodic secrecy rate can scale up as $\mathcal O(\log_2 (M))$ as a function of $M$.

The power scaling law with respect to $M$ has already been studied in traditional secure massive MIMO systems [43]. Interestingly, in the presence of an RIS, we provide an innovative power scaling law for the secure communication regarding $N$ in the following corollary.

\begin{corollary}
Assuming that the transmit power $P_t$ scales down at the rate $P_t=E_u/N$ with $N\rightarrow\infty$, the ergodic secrecy rate is expressed as (49), shown at the top of the next page.
\begin{figure*}[ht]
	\begin{align}
	R_{\rm sec}\rightarrow\ &\left[\log_2\left(1+\frac{\xi E_u M\beta_{{\rm I},k} \beta_1 /K}{\xi E_u (K-1)\beta_{{\rm I},k} \beta_1/K+\left(\kappa_t^{\rm BS}+\kappa_r^{\rm UE}\right)E_u\beta_{{\rm I},k} \beta_1+\sigma_k^2}\right)\nonumber\right.\\
	&\left.-\log_2\left(1+\frac{\xi M_{\rm E} M (1-\xi+\kappa_t^{\rm BS})/K }{M\left(1-\xi+\kappa_t^{\rm BS}\right)^2-M_{\rm E} \big[(\kappa_t^{\rm BS})^2+ M(1-\xi)^2/(M-K)+2(1-\xi)\kappa_t^{\rm BS}\big]}\right)\right]^+.
	\end{align}
	\hrulefill
\end{figure*}
\end{corollary}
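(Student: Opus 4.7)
The plan is to take the ergodic secrecy rate (48) in Corollary~3 as the starting point (it already specializes to the uncorrelated Rayleigh, ideal-estimation, $N\gg M$ regime), substitute $P_t=E_u/N$, and keep only the dominant terms as $N\to\infty$. Because (48) is a difference of two logarithms, I will analyze the legitimate user's SINR and Eve's SINR upper bound separately and then recombine them.

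For the first log, I first record the leading-order asymptotics
\begin{align*}
\bar\gamma_k&=\frac{(\beta_{2,k}+\beta_{{\rm I},k}\beta_1 N)^2}{\beta_{2,k}+\beta_{{\rm I},k}\beta_1 N+\sigma_u^2/(\tau_u\rho)}\sim\beta_{{\rm I},k}\beta_1 N,\\
\Xi_k&=K(\beta_{2,k}+\beta_{{\rm I},k}\beta_1 N)-\bar\gamma_k\sim(K-1)\beta_{{\rm I},k}\beta_1 N,
\end{align*}
so that, after substituting $P_t=E_u/N$, the signal, inter-user interference, and HWI terms in (48) converge to the finite quantities $\xi E_u M\beta_{{\rm I},k}\beta_1/K$, $\xi E_u(K-1)\beta_{{\rm I},k}\beta_1/K$, and $(\kappa_t^{\rm BS}+\kappa_r^{\rm UE})E_u\beta_{{\rm I},k}\beta_1$, respectively, while $\sigma_k^2$ is unchanged. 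The one delicate step is to show that the AN term $(1-\xi)P_t(\beta_{2,k}+\beta_{{\rm I},k}\beta_1 N-\bar\gamma_k)$ drops out in the limit. Using the identity $a-\frac{a^2}{a+b}=\frac{ab}{a+b}$ with $a=\beta_{2,k}+\beta_{{\rm I},k}\beta_1 N$ and $b=\sigma_u^2/(\tau_u\rho)$, the parenthetical tends to the \emph{finite} constant $\sigma_u^2/(\tau_u\rho)$, and multiplying by $P_t=E_u/N$ yields an $\mathcal{O}(1/N)$ contribution that vanishes. This collapses the first log to the first line of (49).

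For the second log, I substitute $p=\xi E_u/(NK)$, $q=(1-\xi)E_u/[N(M-K)]$, and $\kappa_t^{\rm BS}P_t=\kappa_t^{\rm BS}E_u/N$ directly, noting that $q(M-K)+\kappa_t^{\rm BS}P_t=(1-\xi+\kappa_t^{\rm BS})E_u/N$ and $(\beta_3+\beta_{\rm I,E}\beta_1 N)^2\sim\beta_{\rm I,E}^2\beta_1^2 N^2$. The numerator scales as $(1/N)\cdot(1/N)\cdot N^2=\mathcal{O}(1)$, converging to $\xi M_{\rm E}M(1-\xi+\kappa_t^{\rm BS})E_u^2\beta_{\rm I,E}^2\beta_1^2/K$. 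In the denominator, both the first piece $M[q(M-K)+\kappa_t^{\rm BS}P_t]^2(\beta_3+\beta_{\rm I,E}\beta_1 N)^2$ and the $\varpi$ piece are $\mathcal{O}(1)$, whereas the subleading $\beta_3^2$ and $2\beta_3\beta_{\rm I,E}\beta_1 N$ terms inside the second trace are dominated and can be dropped. Factoring $E_u^2\beta_{\rm I,E}^2\beta_1^2$ out of both numerator and denominator cancels all path-loss and power constants, and identifying $\varpi\,N^2/E_u^2$ with $M_{\rm E}\bigl[(\kappa_t^{\rm BS})^2+M(1-\xi)^2/(M-K)+2(1-\xi)\kappa_t^{\rm BS}\bigr]$ reproduces the second log of (49).

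The main obstacle I anticipate is the careful bookkeeping inside Eve's SINR, where $\varpi$ is a sum of three pieces scaling as $1/N^2$ but paired with different $p$, $q$, and $\kappa_t^{\rm BS}P_t$ combinations; a careless grouping can spuriously introduce or drop factors of $M$ or $(M-K)$. I will therefore factor $M_{\rm E}E_u^2/N^2$ out of $\varpi$ first, match the resulting bracket against (49) term by term, and only then cancel the common $(\beta_3+\beta_{\rm I,E}\beta_1 N)^2\sim\beta_{\rm I,E}^2\beta_1^2 N^2$ between numerator and denominator. Everything else reduces to routine limits, and combining the two limiting logs in the $[\,\cdot\,]^+$ operator recovers (49) exactly.
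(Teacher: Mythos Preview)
Your proposal is correct and follows exactly the approach the paper implicitly relies on: start from the simplified expression (48) of Corollary~3, substitute $P_t=E_u/N$, and keep the leading-order terms as $N\to\infty$, treating the legitimate-user and Eve SINRs separately. Your handling of the only nontrivial point---showing the AN leakage term $(1-\xi)P_t(\beta_{2,k}+\beta_{{\rm I},k}\beta_1 N-\bar\gamma_k)$ vanishes via the identity $a-a^2/(a+b)=ab/(a+b)$---and your bookkeeping of $\varpi$ in Eve's denominator are both accurate and match what the paper would need to arrive at (49).
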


\begin{remark}
Based on (49), important observations can be made. 1) Corollary~4 proves that we may achieve non-zero secrecy rates while reducing the total transmit power in a proportionate manner to the reciprocal of the number of RIS elements, i.e., $1/N$. This demonstrates the potential of the RIS providing unique robustness to HWI in securing massive MIMO systems. 2) The power scaling law is irrelevant to the large-scale fading coefficients of the direct links for a large number of $N$. 3) As $N$ grows to infinity, $1/N$ is the fastest declining rate of $P_t$ in order to preserve a positive rate of the legitimate user, while the capacity of Eve converges to a finite limit. We note that even without the AN, i.e., $\xi=1$, secure communications can be ensured by relying on the power scaling thanks to the transmit hardware distortion~$\kappa_t^{\rm BS}$.
\end{remark}

\begin{corollary}
Under the assumption of $M\gg K$, $M\gg M_{\rm E}$, and $N\rightarrow\infty$, the ergodic secrecy rate in (48) reduces to (50), shown at the top of the next page.
\begin{figure*}[ht]
	\begin{align}
R_{\rm sec}\rightarrow \ &\left[\log_2\left(1+\frac{\xi M /K}{\xi (K -1)/K+\kappa_t^{\rm BS}+\kappa_r^{\rm UE}}\right)-\log_2\left(1+\frac{\xi M_{\rm E}}{K(1-\xi+\kappa_t^{\rm BS})}\right)\right]^+.
	\end{align}
	\hrulefill
\end{figure*}
\end{corollary}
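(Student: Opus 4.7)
My plan is to start from the exact expression in (48), pass to the limit $N\to\infty$ with all other parameters fixed, and then apply the massive‐MIMO approximations $M\gg K$ and $M\gg M_{\rm E}$ to the resulting $N$-independent expression. Because every $N$-dependent factor in (48) scales as a polynomial in $N$, the limit can be evaluated by keeping only the dominant powers of $N$ in the numerator and denominator of each $\log_2$ argument separately, and then simplifying.

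For the legitimate user term I would first observe that $\beta_{2,k}+\beta_{{\rm I},k}\beta_1N\sim\beta_{{\rm I},k}\beta_1N$ and that $\bar\gamma_k\to\beta_{{\rm I},k}\beta_1N$ as $N\to\infty$, because the $\sigma_u^2/(\tau_u\rho)$ correction in the denominator of $\bar\gamma_k$ becomes negligible. This immediately gives $\Xi_k\to(K-1)\beta_{{\rm I},k}\beta_1N$ and, crucially, $(\beta_{2,k}+\beta_{{\rm I},k}\beta_1N-\bar\gamma_k)\to0$, so the entire $(1-\xi)P_t$ AN term drops out in the limit. The additive noise $\sigma_k^2$ is also negligible next to the $N$-growing terms. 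Factoring the common $P_t\beta_{{\rm I},k}\beta_1N$ out of numerator and denominator then reduces the first $\log_2$ in (48) directly to $\log_2\!\bigl(1+\xi M/K\bigr)/\bigl(\xi(K-1)/K+\kappa_t^{\rm BS}+\kappa_r^{\rm UE}\bigr)$, which is the first term of (50). Notice this step only uses $N\to\infty$ — no large-$M$ assumption is needed for the legitimate rate.

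The Eve term is where the bulk of the algebra lives and is the main obstacle. After substituting $\beta_3+\beta_{\rm I,E}\beta_1N\sim\beta_{\rm I,E}\beta_1N$ (so the whole $(\beta_3+\beta_{\rm I,E}\beta_1N)^2$ factor and the $\beta_{\rm I,E}^2\beta_1^2N^2$ part of the second trace dominate as $N\to\infty$), the $N^2$ scaling cancels between numerator and denominator, leaving an $N$-independent expression of the form $pM_{\rm E}MA/(MA^2-\varpi)$ with $A\triangleq q(M-K)+\kappa_t^{\rm BS}P_t$. Expanding $MA^2-\varpi$ term by term and using $\varpi=M_{\rm E}[(\kappa_t^{\rm BS}P_t)^2+q^2M(M-K)+2q(M-K)\kappa_t^{\rm BS}P_t]$ gives
\begin{align*}
MA^2-\varpi=\ &q^2M(M-K)\bigl[(M-K)-M_{\rm E}\bigr]+2q(M-K)\kappa_t^{\rm BS}P_t(M-M_{\rm E})\\
&+(\kappa_t^{\rm BS}P_t)^2(M-M_{\rm E}).
\end{align*}
Applying $M\gg K$ and $M\gg M_{\rm E}$ turns each of $M-K$, $M-K-M_{\rm E}$, $M-M_{\rm E}$ into $M$ to leading order, which collapses the three terms into the single perfect square $M(qM+\kappa_t^{\rm BS}P_t)^2$. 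Combined with $q(M-K)=(1-\xi)P_t$ and $qM\approx(1-\xi)P_t$ (again by $M\gg K$), the denominator becomes $MP_t^2(1-\xi+\kappa_t^{\rm BS})^2$ and the numerator $\xi M_{\rm E}MP_t^2(1-\xi+\kappa_t^{\rm BS})/K$, yielding precisely $\xi M_{\rm E}/[K(1-\xi+\kappa_t^{\rm BS})]$ as the SINR inside the Eve $\log_2$. Taking the difference of the two logs and retaining the outer $[\cdot]^+$ completes the proof.

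The delicate point, and the only step that deserves real care, is the consistent use of $M\gg K,M_{\rm E}$ when simplifying $MA^2-\varpi$: dropping $K$ and $M_{\rm E}$ too early would break the cancellation that produces the clean $(qM+\kappa_t^{\rm BS}P_t)^2$ structure and the compact AN-plus-HWI denominator $(1-\xi+\kappa_t^{\rm BS})$ in (50). Apart from that, the entire argument is a limit-and-dominant-term bookkeeping exercise, and no new probabilistic or random-matrix ingredient beyond Proposition 3 is required.
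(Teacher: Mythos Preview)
Your argument is essentially correct and is precisely the dominant-term bookkeeping the paper has in mind (the paper states Corollary~5 without a separate proof, treating it as an immediate consequence of (48)). One small correction: the quantity $\beta_{2,k}+\beta_{{\rm I},k}\beta_1N-\bar\gamma_k$ does \emph{not} tend to $0$; writing $a=\beta_{2,k}+\beta_{{\rm I},k}\beta_1N$ and $b=\sigma_u^2/(\tau_u\rho)$ one has $a-\bar\gamma_k=ab/(a+b)\to b>0$ as $N\to\infty$. The conclusion you draw is nevertheless right, because this residual is $O(1)$ whereas the surviving denominator terms scale as $O(N)$, so after factoring out $P_t\beta_{{\rm I},k}\beta_1N$ the $(1-\xi)P_t$ AN leakage and $\sigma_k^2$ both vanish relative to the $\xi(K-1)/K$ and $\kappa_t^{\rm BS}+\kappa_r^{\rm UE}$ contributions. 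Your treatment of the Eve term---cancelling the common $N^2$ factor, expanding $MA^2-\varpi$, and then using $M-K\approx M-M_{\rm E}\approx M-K-M_{\rm E}\approx M$ to recover the perfect square $M(qM+\kappa_t^{\rm BS}P_t)^2$---is exactly the intended simplification and yields (50) directly.
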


\begin{remark}
It is readily seen that the ergodic secrecy rate converges to a limit for large $N$, and the injection of AN is necessary for guaranteeing secure communications, especially without the transmit hardware distortion. The ergodic secrecy rate does not increase uniformly with $\kappa_t^{\rm BS}$, thus a large value of $\kappa_t^{\rm BS}$ can be harmful. We notice that the impact of the transmit HWI is decremental by enlarging the BS antenna array in a RIS-aided secure massive MIMO network.
\end{remark}

\subsection{Optimized Transmission Design}
In this subsection, we endeavor to adaptively adjust the power allocation coefficient $\xi$ to maximize the ergodic secrecy rate. From (40), we first reformulate the secrecy rate as
\begin{align}
R_{\rm sec}=\ &\left[\log_2\bigg(1+\frac{\xi \ddot S_k}{\xi\psi+\ddot D_k}\bigg)\nonumber\right.\\
&\left.-\log_2\left(1+\frac{\xi \upsilon A_1}{\upsilon^2 A_2-(1-\xi)^2A_3+\xi A_4-A_5}\right)\right]^+,
\end{align}
where $\psi=\ddot I_k+\frac{K}{M}{\rm tr}({\bf C}_k)$, $\upsilon=1-\xi+\kappa_t^{\rm BS}$, $A_1=M_{\rm E}M{\rm tr}({\bf R}_k{\bf \Psi}_k^{-1}{\bf R}_k{\bf Q}_{\rm E}){\rm tr}({\bf Q}_{\rm E})/{\rm tr}({\bf R}_k{\bf \Psi}_k^{-1}{\bf R}_k)$, $A_2=K\left[{\rm tr}({\bf Q}_{\rm E})\right]^2$, $A_3=\frac{M_{\rm E}MK}{M-K}{\rm tr}\left({\bf Q}_{\rm E}^2\right)$, $A_4=2M_{\rm E}K\kappa_t^{\rm BS}{\rm tr}\left({\bf Q}_{\rm E}^2\right)$, and $A_5=M_{\rm E}K\kappa_t^{\rm BS}(\kappa_t^{\rm BS}+2){\rm tr}\left({\bf Q}_{\rm E}^2\right)$. To proceed, the first derivative of $R_{\rm sec}$ in (51) is calculated by (52), shown at the top of this page.
\begin{figure*}[ht]
	\begin{align}
	\frac{\partial R_{\rm sec}}{\partial \xi}=\ &\frac{\ddot S_k\ddot D_k}{\ln2(\xi\psi+\ddot D_k)(\xi\psi+\ddot D_k+\xi \ddot S_k)}\nonumber\\
	&-\frac{A_1(1-2\xi+\kappa_t^{\rm BS})\left[\upsilon^2 A_2-(1-\xi)^2A_3+\xi A_4-A_5\right]-\xi\upsilon A_1\left[A_2(2\xi-2-2\kappa_t^{\rm BS})-A_3(2\xi-2)+A_4\right]}{\ln2\left[\upsilon^2 A_2-(1-\xi)^2A_3+\xi A_4-A_5\right]\left[\upsilon^2 A_2-(1-\xi)^2A_3+\xi A_4-A_5+\xi\upsilon A_1\right]}.
	\end{align}
\hrulefill
\end{figure*}
It is seen that the expression in (52) takes the form of high orders with respect to $\xi$. Thus, it is usually difficult to obtain analytical solutions directly. To address this, under the assumption of $M_{\rm E}K/M^2\ll1$, we deduce the optimal solution in closed form to get a desirable $\xi^*$. Numerical results are provided to assess its accuracy in the next section. Then, $\frac{\partial R_{\rm sec}}{\partial \xi}$ in (52) becomes
\begin{align}
R_{\rm sec}'=\frac{\partial R_{\rm sec}}{\partial \xi}=\ &\frac{\ddot S_k\ddot D_k}{\ln2(\xi\psi+\ddot D_k)(\xi\psi+\ddot D_k+\xi \ddot S_k)}\nonumber\\
	&-\frac{(1+\kappa_t^{\rm BS})A_1}{\ln2(\upsilon^2KL_1+\xi\upsilon A_1)},
\end{align}
where $L_1=[{\rm tr}({\bf Q}_{\rm E})]^2-\frac{M_{\rm E}M}{M-K}{\rm tr}\left({\bf Q}_{\rm E}^2\right)$. Based on the expression in (53), we further analyze the convexity of $R_{\rm sec}$ and then determine an optimal solution that satisfies the power constraints.
\begin{lemma}
	The optimal choice of the power allocation ratio $\xi$ is given by
	\begin{equation}
		\xi^*=\frac{b-\sqrt{b^2+4ac}}{2a},
	\end{equation}
	where $a=\ddot S_k\ddot D_k(L_1K-A_1)-(\psi^2+\psi\ddot S_k)(1+\kappa_t^{\rm BS})A_1$, $b=(1+\kappa_t^{\rm BS})[2\ddot S_k\ddot D_kL_1K-\ddot S_k\ddot D_kA_1+\ddot D_kA_1(2\psi+\ddot S_k)]$, and $c=(1+\kappa_t^{\rm BS})^2\ddot S_k\ddot D_kL_1K-(1+\kappa_t^{\rm BS})\ddot D_k^2A_1$ are constants over one coherence time.
\end{lemma}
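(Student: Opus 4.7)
The plan is to locate the stationary point of $R_{\rm sec}(\xi)$ by setting the derivative (53) to zero, show that the resulting condition reduces to a quadratic in $\xi$, and pick the admissible root in $\xi\in(0,1]$. Equating the two fractions in (53) and cross-multiplying removes the $\ln 2$ factor and all denominators, giving
\begin{equation*}
\ddot S_k\ddot D_k\big(\upsilon^2 K L_1 + \xi\upsilon A_1\big) = (1+\kappa_t^{\rm BS})A_1(\xi\psi+\ddot D_k)(\xi\psi+\ddot D_k+\xi \ddot S_k).
\end{equation*}

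Substituting $\upsilon = (1+\kappa_t^{\rm BS}) - \xi$ turns $\upsilon^2$ into a quadratic in $\xi$ and $\xi\upsilon$ into a polynomial of degree two, while the right-hand side is already quadratic after expanding the product $(\xi\psi+\ddot D_k)(\xi\psi+\ddot D_k+\xi\ddot S_k) = \xi^2(\psi^2+\psi\ddot S_k) + \xi(2\psi\ddot D_k+\ddot S_k\ddot D_k) + \ddot D_k^2$. Collecting coefficients of $\xi^2$, $\xi^1$, and $\xi^0$ and moving everything to one side then yields a quadratic whose coefficients—after cancelling the symmetric $\ddot S_k\ddot D_k A_1$ contributions coming from the $\xi\upsilon A_1$ term on the left and the $\xi\ddot S_k\ddot D_k$ cross-term on the right—collapse to exactly the $a$, $b$, and $c$ in the lemma statement. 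The quadratic formula then delivers the two candidate roots, and the $(-)$-branch corresponds to the claimed $\xi^{\ast}$.

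To close the argument, I would verify that $\xi^{\ast}\in(0,1]$ and that it is a maximizer rather than a minimizer. The first fraction in (53) is strictly decreasing in $\xi$ on $(0,1]$, while the subtracted fraction is non-decreasing in $\xi$ because its denominator $\upsilon^2 K L_1 + \xi\upsilon A_1$ shrinks as $\upsilon$ decreases. Hence $R_{\rm sec}'$ transitions from positive to negative at a unique interior point, identifying it as a local maximum; a comparison with the boundary values $R_{\rm sec}(0^+)$ and $R_{\rm sec}(1)$ then promotes it to the global maximum over $(0,1]$.

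The main obstacle is the algebraic bookkeeping needed to collapse the expanded polynomial into the compact $a$, $b$, $c$ form—especially tracking the signs produced by $\upsilon^2 = \xi^2 - 2(1+\kappa_t^{\rm BS})\xi + (1+\kappa_t^{\rm BS})^2$ when combined with $\xi\upsilon A_1$—and in arguing cleanly that the $(-)$-branch of the quadratic formula consistently lies in $(0,1]$ across the HWI and correlation parameter regimes of practical interest. A secondary subtlety is the sign of $a$ itself, which depends on the relative magnitudes of $\ddot S_k\ddot D_k(K L_1 - A_1)$ and $(1+\kappa_t^{\rm BS})A_1(\psi^2+\psi\ddot S_k)$; under typical operating conditions the latter dominates, making $a<0$, which is precisely the sign pattern under which the $(-)$-branch yields the admissible root.
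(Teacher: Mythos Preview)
Your proposal is correct and follows essentially the same route as the paper: set the simplified derivative (53) to zero, reduce to a quadratic in $\xi$, and argue that the stationary point is the unique maximizer. The only cosmetic difference is that the paper establishes monotonicity of $R_{\rm sec}'$ by computing the second derivative and noting $R_{\rm sec}''<0$, whereas you argue term-by-term monotonicity of the two fractions in (53); both arguments serve the same purpose and lead to the same closed-form $\xi^{\ast}$.
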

\begin{proof}
By using the expression of $R'_{\rm sec}$ in (53), the second-order derivative $R''_{\rm sec}$ can be easily computed. Details are not provided here due to the page limit. Through the analysis, we prove that $R'_{\rm sec}$ exhibits a strictly decreasing behavior regarding $\xi$ since $R''_{\rm sec}<0$. Further, it is verified that $R'_{\rm sec}$ is decreasing from positive to negative values over the range of the considered intervals. Thus, $R_{\rm sec}$ is convex, which implies that the unique optimal value of $\xi$ exists. It is obtained directly by solving $R'_{\rm sec}=0$.
\end{proof}

\section{Numerical Results}
In this section, simulation results are provided to validate the theoretical analysis of RIS-assisted downlink secure communication in the presence of HWI and imperfect CSI. Specifically, we consider the following setup: all the legitimate users and Eve are evenly dispersed around a 50-meter-radius circle, the center of which is situated 100 and 200 meters away from the RIS and the BS, respectively. The LoS channel model between the BS and the RIS is expressed as
\begin{align}
[\bar{\bf H}_1]_{a,b}=\ &\exp\left(j\frac{2\pi}{\lambda}((a-1)d_{\rm BS}\sin(\theta_1(b))\sin(\phi_1(b))   \right.\nonumber\\
&\left. +(b-1)d_{\rm RIS}\sin(\theta_2(a))\sin(\phi_2(a)))\right),
\end{align}
where $d_{\rm BS} = 0.5\lambda$ and $d_{\rm RIS} = 0.25\lambda$ are the antenna spacing of the BS and the RIS, respectively. In addition, $\theta_1(b)$ and $\phi_1(b)$ are generated by uniform distributions between 0 to $\pi$ and 0 to $2\pi$, which further yields $\theta_2(a) = \pi-\theta_1(b)$ and $\phi_2(a) = \pi +\phi_1(b)$. The large-scale path loss of the reflected and the direct links are denoted by $\beta_r=J_0\left({d}/{J_1}\right)^{-\zeta_r}$ and $\beta_d=J_0\left({d}/{J_1}\right)^{-\zeta_d}$, where $\zeta_r$ and $\zeta_d$ are the path loss exponent of the corresponding links, and $J_0=-20$ dB accounts for the path loss at a distance of $J_1=1$ m. Regarding the spatial correlation, the exponential correlation model (i.e., $[{\bf A}]_{ij}=l^{\vert i-j\vert}$) is adopted at the BS, where $0\leq l<1$ stands for the antenna correlation index. The pilot length is equal to $\tau_u = K$ and the RIS phase shifts are set to $\pi/4$. Unless otherwise stated, the simulation parameters are given in Table~I for convenience.

\begin{table}[t]
\centering
\caption{\text {Simulation Parameters}}
\begin{tblr}{
  vline{2} = {-}{},
  hline{1,8} = {-}{0.08em},
  hline{2} = {-}{},
}
Parameters                                & Values           \\
Path loss exponent for the RIS-aided link & $\zeta_r=2.1$       \\
Path loss exponent for direct link        & $\zeta_d=3.2$       \\
Correlation coefficient at the BS         & $l=0.6$          \\
Spacing among RIS elements         & $d_{\rm H}=d_{\rm V}=\lambda/2$ \\
RIS phase noise power                     & $\sigma_p^2=0.1$    \\
Power allocation coefficient              & $\xi=0.5$      
\end{tblr}
\end{table}

\begin{figure}[t]
	\centering
	\centerline{\includegraphics[width=0.4\textwidth]{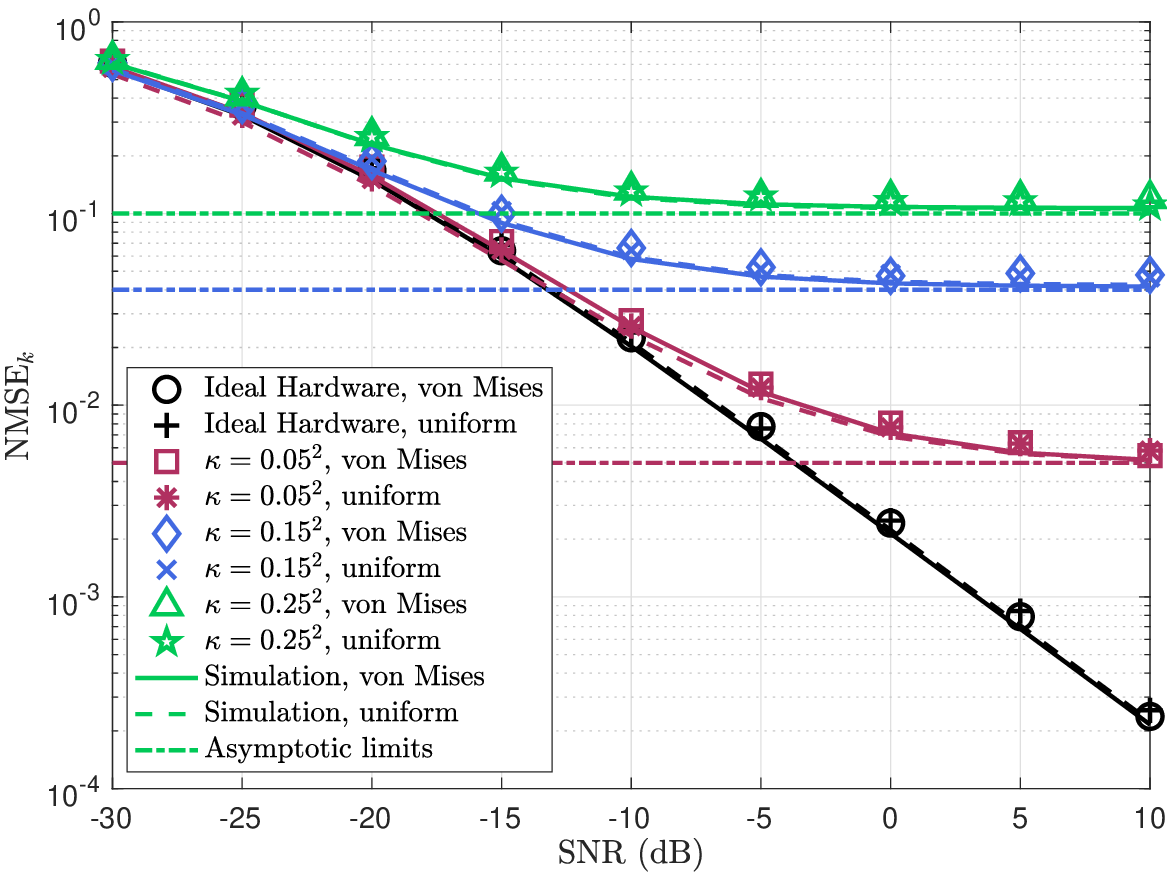}}
	\caption{NMSE versus SNR with different transceiver hardware parameters.}
\end{figure}

\begin{figure}[t]
	\centering
	\centerline{\includegraphics[width=0.4\textwidth]{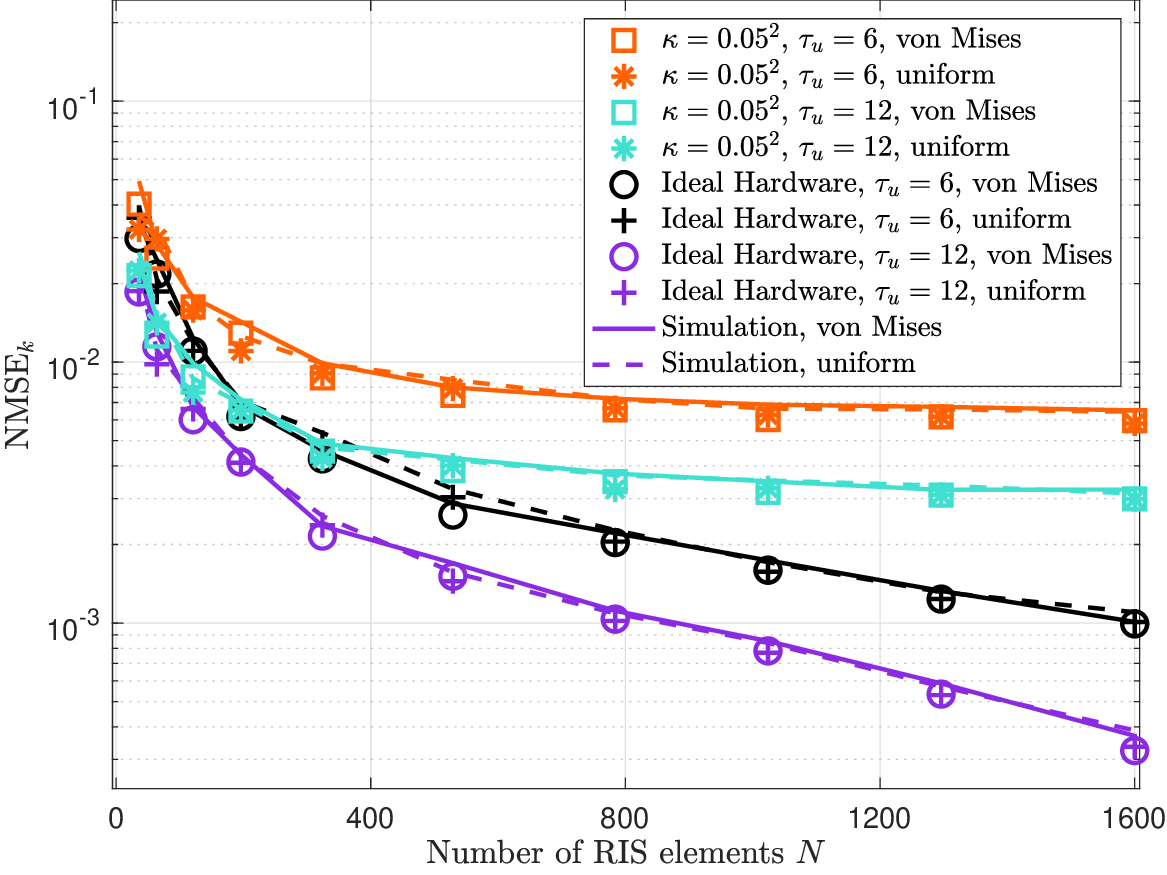}}
	\caption{NMSE versus the number of RIS elements with different transceiver hardware parameters and pilot lengths.}
\end{figure}

\begin{figure}[t]
	\centering
	\centerline{\includegraphics[width=0.4\textwidth]{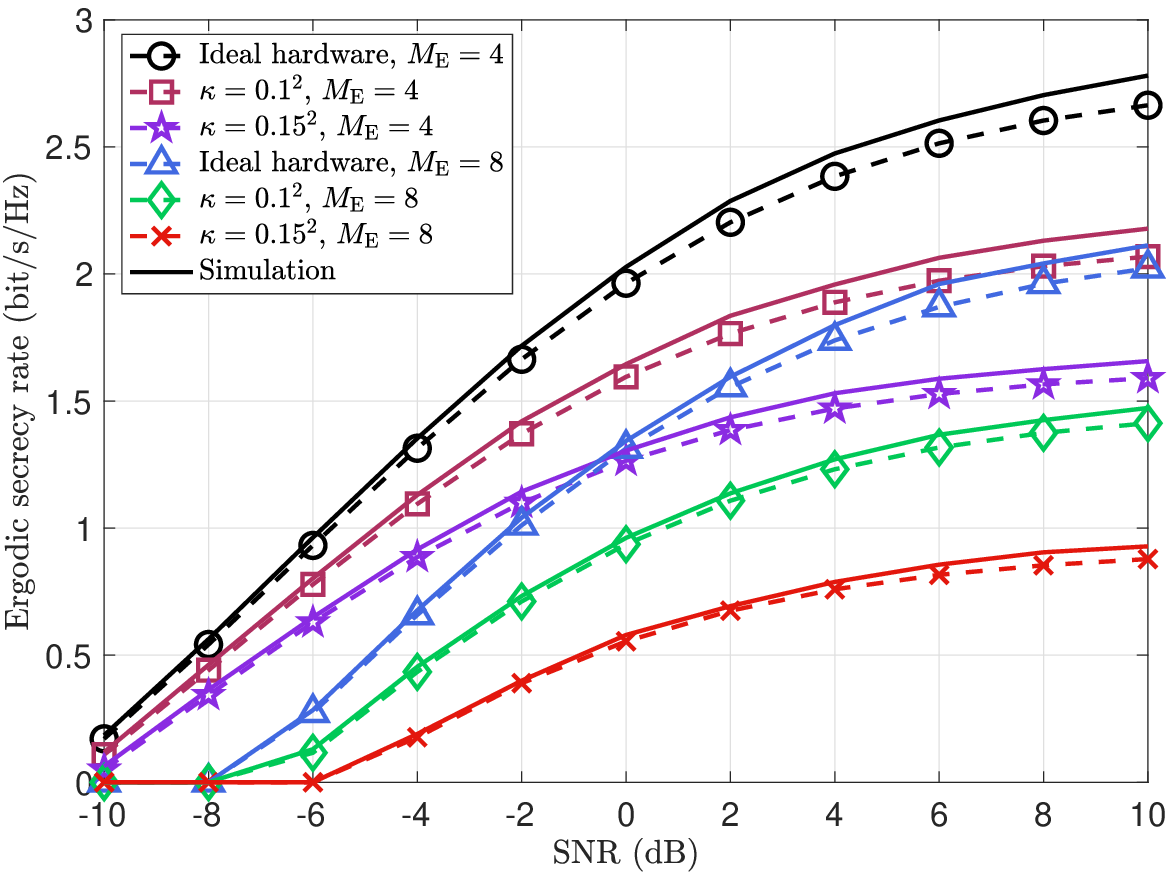}}
	\caption{Ergodic secrecy rate and lower bound versus SNR ($M=128$, $N=196$, $K=6$).}
\end{figure}

\begin{figure}[t]
	\centering
	\centerline{\includegraphics[width=0.4\textwidth]{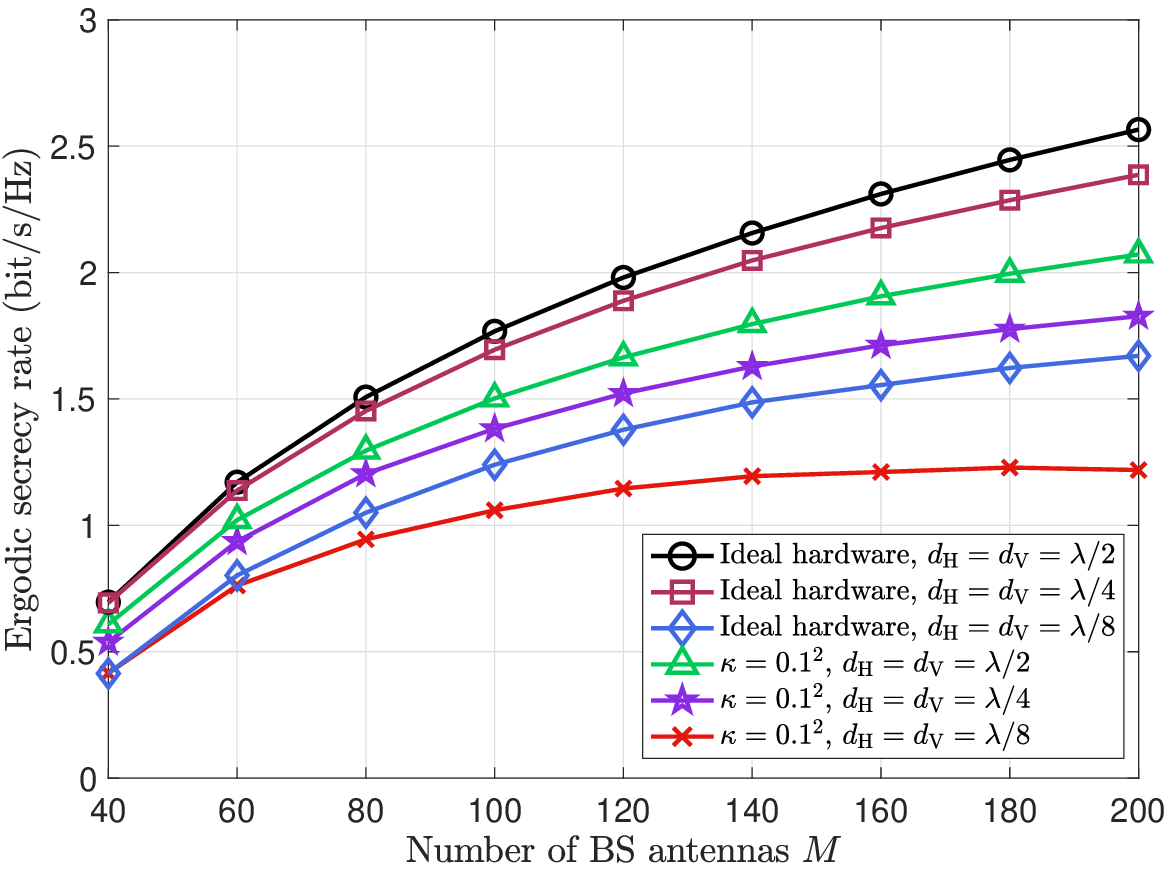}}
	\caption{Ergodic secrecy rate versus the number of BS antennas ($N=196$, $K=6$, $M_{\rm E}=4$, SNR$=0$ dB).}
\end{figure}

\begin{figure}[t]
	\centering
	\centerline{\includegraphics[width=0.4\textwidth]{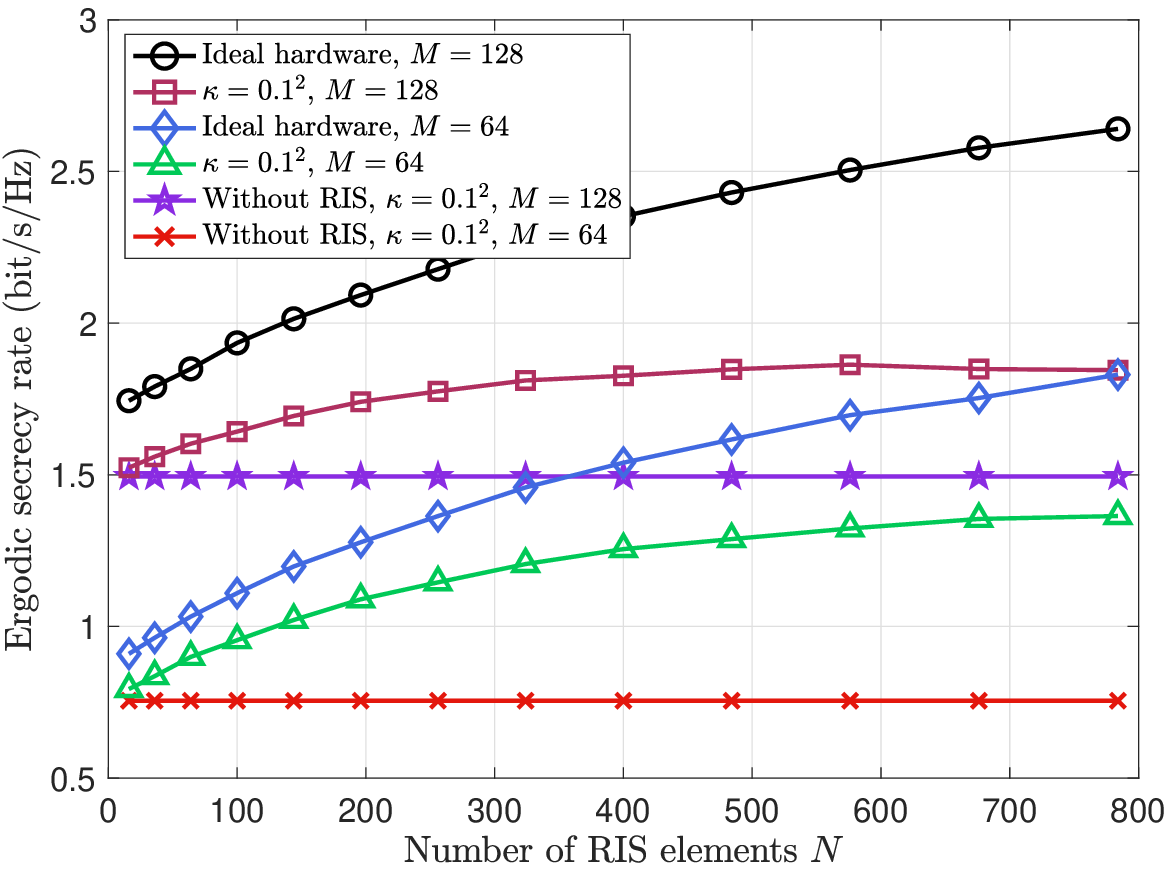}}
	\caption{Ergodic secrecy rate versus the number of RIS elements ($M=64$, $K=6$, $M_{\rm E}=4$, SNR$=0$ dB).}
\end{figure}

\begin{figure}[t]
	\centering
	\centerline{\includegraphics[width=0.4\textwidth]{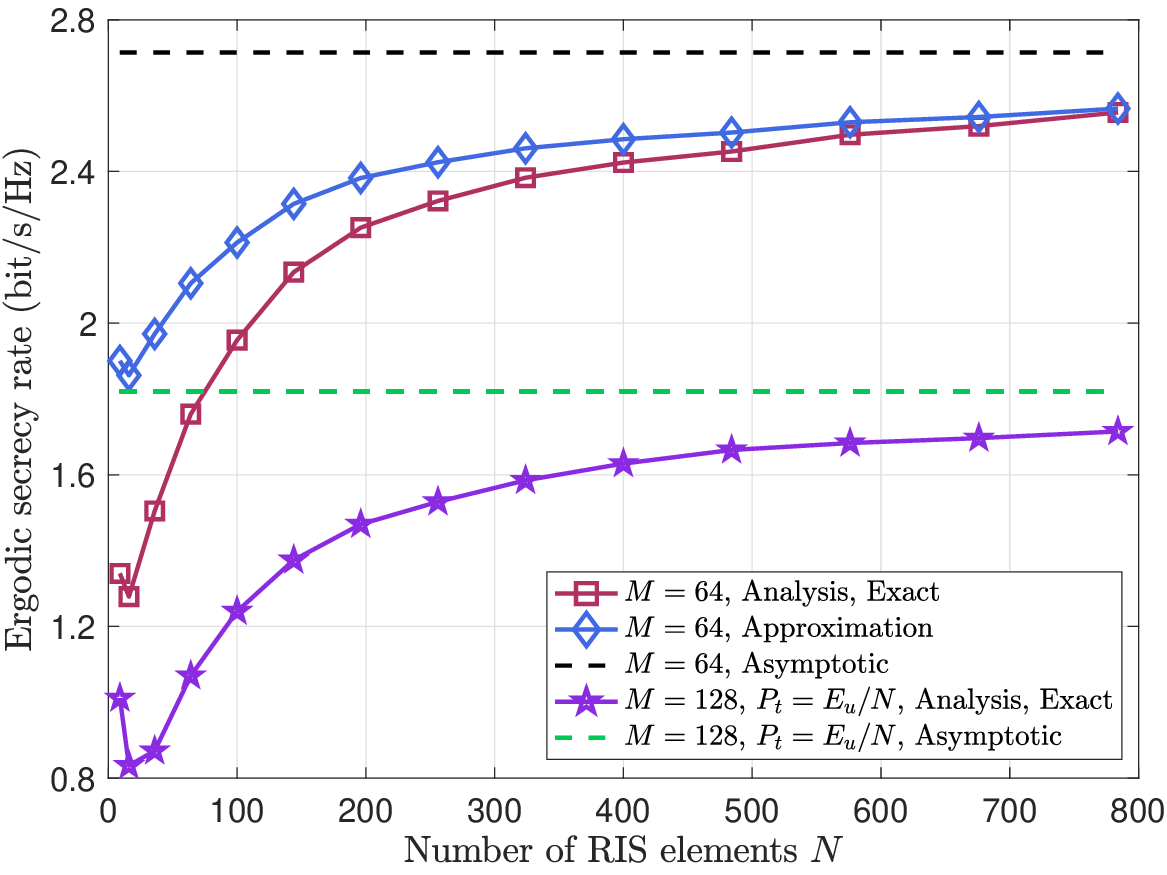}}
	\caption{Ergodic secrecy rate versus the number of RIS elements without RIS phase-shift error ($E_u=20$ dB, $K=6$, $M_{\rm E}=4$).}
\end{figure}

\begin{figure}[t]
	\centering
	\centerline{\includegraphics[width=0.4\textwidth]{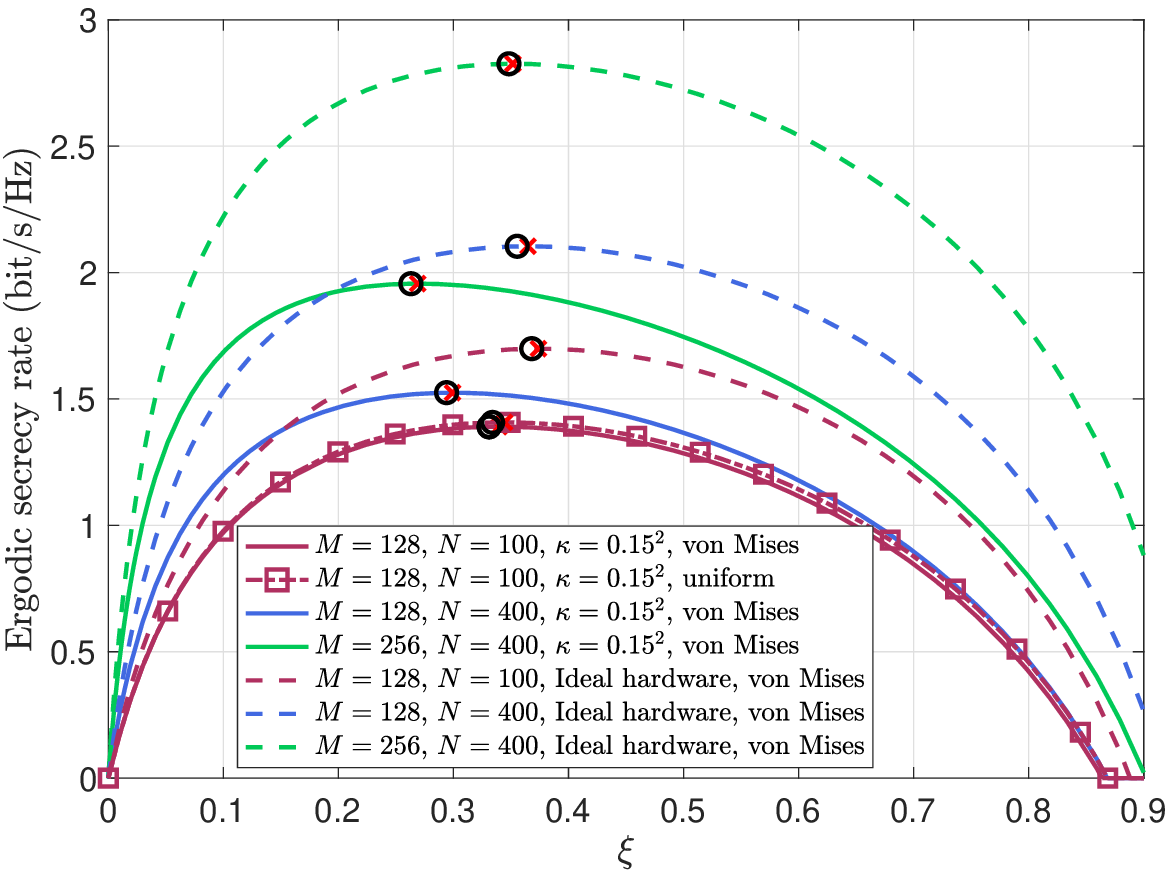}}
	\caption{Ergodic secrecy rate versus $\xi$ with different transceiver hardware parameters ($K=10$, $M_{\rm E}=4$, SNR$=0$ dB).}
\end{figure}

\begin{figure}[t]
	\centering
	\centerline{\includegraphics[width=0.4\textwidth]{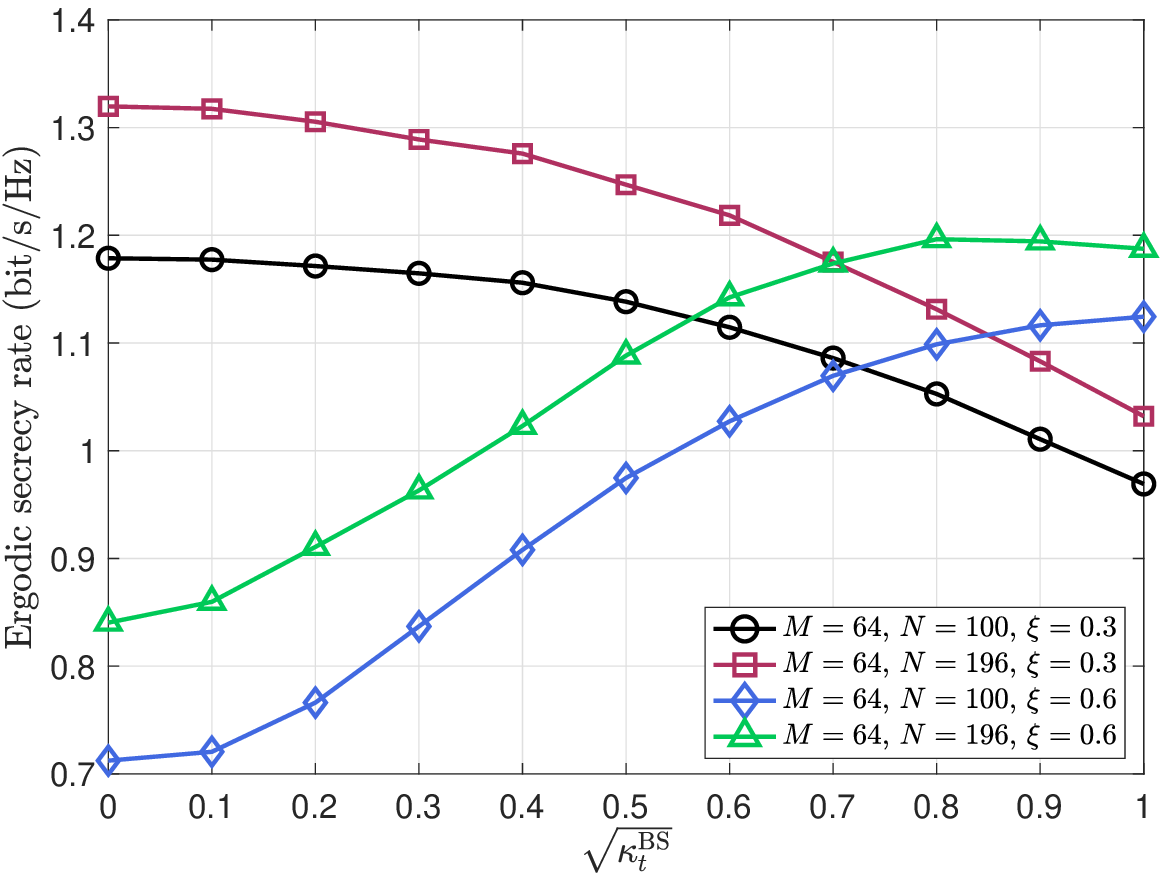}}
	\caption{Ergodic secrecy rate versus the transmit distortion parameter $\kappa_t^{\rm BS}$ ($M_{\rm E}=4$, $\kappa_r^{\rm UE}=\kappa_t^{\rm UE}=\kappa_r^{\rm BS}=0.1^2$, SNR$=0$ dB).}
\end{figure}

\begin{figure}[t]
	\centering
	\centerline{\includegraphics[width=0.4\textwidth]{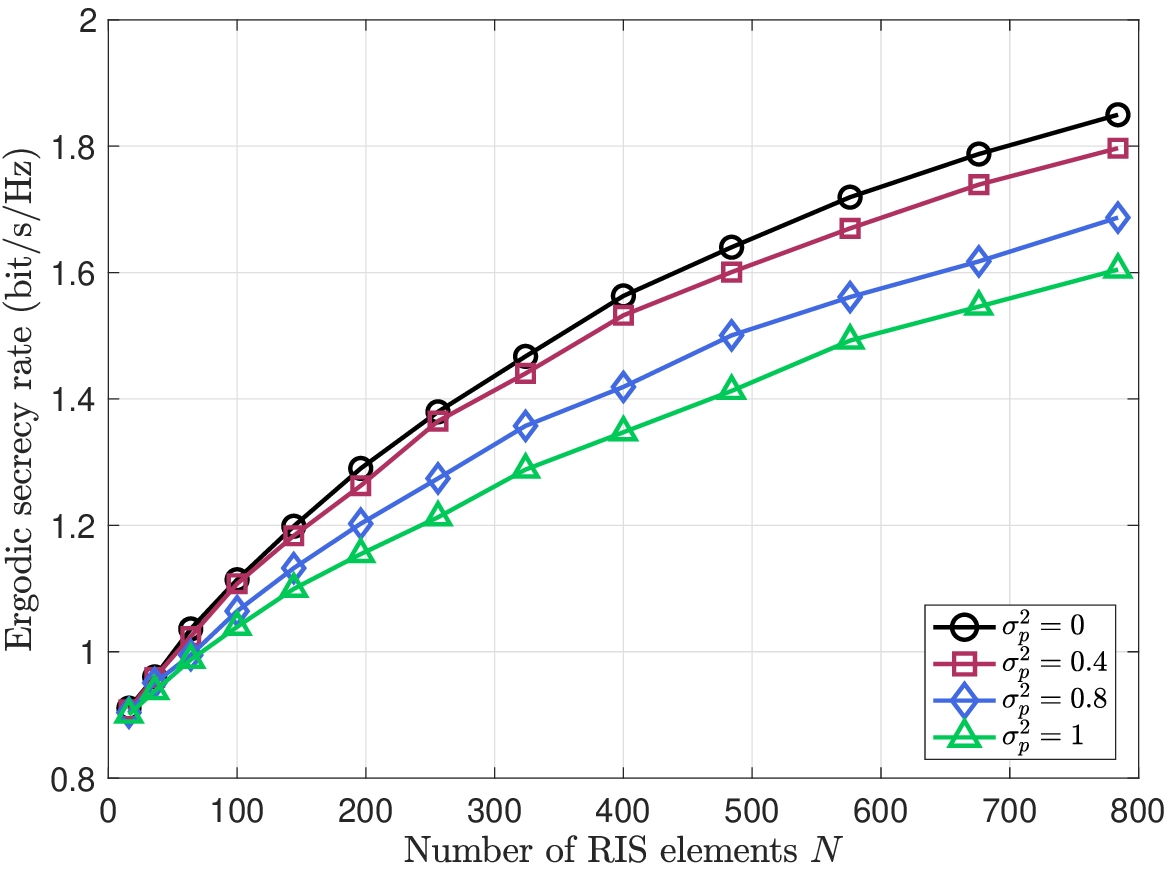}}
	\caption{Ergodic secrecy rate versus the number of RIS elements with different RIS phase noise power $\sigma_p^2$ ($M=64$, $M_{\rm E}=4$, SNR$=0$ dB).}
\end{figure}

Fig. 2 depicts the NMSE performance of the LMMSE channel estimator versus SNR for different values of HWI parameters $\kappa$. The results for ideal hardware (i.e., $\kappa_t^{\rm UE} = \kappa_r^{\rm BS}= \kappa=0$) are also considered for comparison. As predicted in Corollary~1, at high values of $\rho$, the NMSE$_k$ approaches certain error floors under the constraint of imperfect hardware, while the NMSE$_k$ decreases without bounds with ideal hardware. Besides, we see that the NMSE$_k$ is nearly identical for uniform and von Mises distributions of the RIS phase-shift error. This is because the RIS assigns the same phase noise power in both cases.

Fig. 3 plots the NMSE performance versus the number of RIS elements $N$ for different pilot lengths. It is evident that the NMSE$_k$ decreases with $N$ and approaches zero in the asymptotic regime of $N\rightarrow\infty$, which coincides with Corollary~2. The reason can be explained as follows. As $N$ increases, the RIS-aided channels of legitimate users become dominated by boosted channel gain. Additionally, a large number of $N$ leads to a more deterministic channel between the BS and the RIS, resulting in small estimation errors. Moreover, we observe that extending the pilot sequence can attain better estimation performance of the LMMSE channel estimator as expected from Section~III-A.

Fig. 4 illustrates the ergodic secrecy rate versus SNR with $\kappa_t^{\rm UE} = \kappa_r^{\rm BS}= \kappa_r^{\rm UE} = \kappa_t^{\rm BS}=\kappa$. From Fig. 4, we observe that the analytical expression in (34) and the Monte Carlo simulation match well, which confirms the accuracy of the derivations. Moreover, a large number of $M_{\rm E}$ and a high level of the transceiver HWI jeopardize the ergodic secrecy rate due to the improved wiretapping ability at Eve and severely impaired signal at the legitimate users. Especially, in the regime of high SNR, the impact of HWI becomes more pronounced since the power of HWI is dominated in (26).

We next analyze the ergodic secrecy rate versus the number of BS antennas $M$ in Fig. 5 for different values of $\kappa$. Clearly, as $M$ grows, the ergodic secrecy rate increases without bounds in the case of ideal hardware. However, a secrecy performance gap between ideal hardware and $\kappa=0.1^2$ appears and then expands with $M$. In particular, the impact of spatial correlation at the RIS is also evaluated. It is observed that the ergodic secrecy rate degrades significantly when the spatial correlation increases, i.e., the values of $d_{\rm H}$ and $d_{\rm V}$ decrease from $\lambda/2$ to $\lambda/8$. This is because the degrees of freedom shrink with smaller interdistance among RIS elements. Furthermore, the ergodic secrecy rate under different values of $\kappa $ and $d_{\rm H}$/$d_{\rm V}$ is very close when the number of $M$ is small. In contrast, for large $M$, the secrecy performance gain becomes more prominent with ideal hardware.

We evaluate the ergodic secrecy rate versus the number of RIS elements $N$ in Fig. 6 for different parameter settings. It can be found that the ergodic secrecy rate in the case of the ideal hardware grows without bounds as $N$ increases, similarly to the behavior of increasing $M$. However, when the number of $N$ is large, the ergodic secrecy rate saturates quickly for $\kappa=0.1^2$. In addition, by comparing the ergodic secrecy rate with and without RIS, the advantages of an RIS are confirmed even with preconfigured RIS phase shifts. 

Fig. 7 examines the ergodic secrecy rate versus $N$ when ideal hardware for channel estimation and spatially uncorrelated channels are considered. The exact analytical ergodic secrecy rate is plotted by using the expression in Proposition~3, and the approximated and the asymptotic secrecy rates are plotted by using the expressions in (48) and (50), respectively. First, we can see that the asymptotic secrecy rate remains constant with $N$, and the exact secrecy rate approaches the theoretical limit in (50) for large numbers of $N$. Besides, the performance gap between the approximated and the exact analytical secrecy rates declines with $N$ as expected. Furthermore, the derived power scaling laws in (49) are demonstrated where $P_t$ scales down as $1/N$. We observe that in the cases of imperfect CSI and HWI, the ergodic secrecy rate can preserve positive values via the proposed power scaling in the regime of $N\rightarrow\infty$, which verifies the results in Corollary~4.

Fig. 8 shows the relation between the ergodic secrecy rate and the power allocation coefficient $\xi$. The numerical and analytical solutions of the optimal $\xi$ are provided by using markers $\times$ and $\circ$, respectively. It can be found that $\xi^*$ in (54) depends on the parameters $M$, $N$, and $\kappa$. Specifically, $\xi^*$ is a monotonically decreasing function of $M$, while there is a slight shift of $\xi^*$ as $N$ increases from $100$ to $400$. In other words, the BS needs to increase the transmitted power of the AN when the number of $M$, $N$ increases. This can be explained as follows. On one hand, the correlation between the received signals at Eve and legitimate users always leads to degradations of the ergodic secrecy rate since they share the same channel component ${\bf H}_1$. On the other hand, the channel distinctions due to the presence of ${\bf H}_1$ exhibit low for large $M$ and $N$, while the degrees of freedom of the RIS-user/Eve channel is magnified as $N$ increases. Thus, the information leakage becomes more severe for large $M$ rather than large $N$. However, when the value of $\xi$ is larger than $\xi^*$, the secrecy performance gain by increasing $N$ is diminished for $\kappa=0.15^2$ due to insufficient interference at Eve. Moreover, $\xi^*$ is nearly identical under different distributions of the RIS phase-shift error since the same phase noise power contributes equally to the secrecy performance.

In Fig. 9, we examine the impact of the transmit distortion parameter $\kappa_t^{\rm BS}$ on the ergodic secrecy rate under different setups ($\xi$, $M$, and $N$). Intuitively, a large value of $\kappa_t^{\rm BS}$ is not always harmful to the ergodic secrecy rate. More specifically, with sufficient AN power (i.e., for small values of $\xi$), the ergodic secrecy rate is negatively affected by the transmit HWI. In contrast, for large values of $\xi$, an increasing ergodic secrecy rate is observed since the impact of transmit HWI is similar to that of AN. Moreover, the secrecy rate loss that comes from $\kappa_t^{\rm BS}$ can be compensated by increasing $N$, while the secrecy performance gain produced by large $N$ is diminished as $\kappa_t^{\rm BS}$ increases. This is because the transmit HWI dominates the secrecy performance in which the benefits of the RIS become inconspicuous.

In Fig. 10, the impact of RIS phase noise on the ergodic secrecy rate is evaluated versus the number of RIS elements $N$ when the ideal transceiver hardware is considered. We observe that the ergodic secrecy rate deteriorates as the RIS phase noise power $\sigma_p^2$ increases. Moreover, when the RIS phase noise power $\sigma_p^2=1$, the number of $N$ that are needed to achieve the same target secrecy rate performance of $R_{\rm sec} = 1.6$ bit/s/Hz is nearly doubled compared to the case when the phase noise power at the RIS is $\sigma_p^2=0$.

\section{Conclusion}
This paper presented an investigation of the ergodic secrecy rate in RIS-aided multiuser massive MIMO with a passive Eve while taking CSI imperfections, transceiver HWI, RIS phase noise, and correlated Rayleigh fading into account. By utilizing the obtained channel estimation results, we deduced a theoretical lower bound for the ergodic secrecy rate at the target user with MRT and the null-space AN scheme. Our theoretical analysis proved that the total transmit power may decline proportionally to $1/N$, yet it can preserve non-zero secrecy rates in the asymptotic regime of $N\rightarrow\infty$. Besides, in the presence of imperfect CSI and HWI, we showed that the ergodic secrecy rate on the order of $\mathcal O(\log_2 (M))$ can be attained as $N$ goes to infinity, and the impact of transceiver HWI is decremental when $M$ is large. Furthermore, we found that the ergodic secrecy rate increases significantly with $N$ regardless of the power allocation coefficient, however, under imperfect hardware, the advantage brought by the RIS becomes marginal.

\section*{Appendix A}
Given that ${\bf g}_{{\rm I},k}$ and ${\bf g}_{{\rm B},k}$ are independent, the distribution of ${\bf h}_k$ can be determined by ${\bf h}_k\sim\mathcal{CN}(0, {\bf R}_k)$, where ${\bf R}_k={\bf R}_{{\rm B},k}+{\bf H}_1{\bf \Phi}\widetilde{\bf R}_{{\rm I},k}{\bf \Phi}^H{\bf H}_1^H$ with
\begin{align}
\widetilde{\bf R}_{{\rm I},k}=\mathbb E\{{\bf \Theta}{\bf R}_{{\rm I},k}{\bf \Theta}^H\}
=\varrho^2{\bf R}_{{\rm I},k}+\beta_{{\rm I},k}(1-\varrho^2){\bf I}_N
\end{align}
and $\varrho=\mathbb E\{e^{j\tilde{\theta}_n}\}$. In cases where the RIS phase-shift error obeys the uniform distribution, we obtain $\varrho=\frac{\sin(\iota_p)}{\iota_p}$; in cases where it follows the von Mises distribution, we obtain $\varrho=\frac{I_1(\nu_p)}{I_0(\nu_p)}$.

According to the properties of the LMMSE channel estimation scheme, we have
\begin{align}
\hat{\bf h}_k=\mathbb E\{{\bf y}_{p,k}{\bf h}_k^H\}\left(\mathbb E\left\{{\bf y}_{p,k}{\bf y}_{p,k}^H\right\}\right)^{-1}{\bf y}_{p,k}.
\end{align}
The first term $\mathbb E\{{\bf y}_{p,k}{\bf h}_k^H\}$ is obtained as
\begin{align}
&\mathbb E\{{\bf y}_{p,k}{\bf h}_k^H\}\nonumber\\
&=\mathbb E\bigg\{\bigg(\tau_u\sqrt{\rho}{\bf h}_k+\sum\limits_{i=1}^{K}{\boldsymbol \eta}_{t,i}^H{\boldsymbol\phi}_k{\bf h}_i+{\boldsymbol \Upsilon}_r^{\rm BS}{\boldsymbol\phi}_k+{\bf N}_p{\boldsymbol\phi}_k\bigg){\bf h}_k^H\bigg\}\nonumber\\
&=\tau_u\sqrt{\rho}\mathbb E\left\{{\bf h}_k{\bf h}_k^H\right\}=\tau_u\sqrt{\rho}{\bf R}_k.
\end{align}
Then, the term $\mathbb E\{{\bf y}_{p,k}{\bf y}_{p,k}^H\}$ is calculated as
\begin{align}
&\mathbb E\left\{{\bf y}_{p,k}{\bf y}_{p,k}^H\right\}\nonumber\\
&=\tau_u^2\rho\mathbb E\left\{{\bf h}_k{\bf h}_k^H\right\}+\mathbb E\left\{\sum\limits_{i=1}^{K}{\boldsymbol \eta}_{t,i}^H{\boldsymbol\phi}_k{\bf h}_i\big(\sum\limits_{i=1}^{K}{\boldsymbol \eta}_{t,i}^H{\boldsymbol\phi}_k{\bf h}_i\big)^H\right\}\nonumber\\
&\ \ \ +\mathbb E\{{\boldsymbol \Upsilon}_r^{\rm BS}{\boldsymbol\phi}_k({\boldsymbol \Upsilon}_r^{\rm BS}{\boldsymbol\phi}_k)^H\}+\mathbb E\{{\bf N}_p{\boldsymbol\phi}_k({\bf N}_p{\boldsymbol\phi}_k)^H\}\nonumber\\
&=\tau_u^2\rho{\bf R}_k+\tau_u\rho\kappa_t^{\rm UE}\sum_{i=1}^{K}{\bf R}_i+\tau_u\rho\kappa_r^{\rm BS}\sum_{i=1}^{K}{\bf I}_M\circ{\bf R}_i+\tau_u\sigma_u^2{\bf I}_M.
\end{align}
By substituting (58) and (59) into (57), the proof is completed.

\section*{Appendix B}
Let us compute the expectations in (24) one by one.

1) For the numerator in (25), we have
\begin{align}
\left\vert{\mathbb E}\{{\bf h}_k^H{\bf w}_k\}\right\vert^2=\ &\frac{1}{\mathbb E\{\Vert \hat{\bf h}_k \Vert^2\}}\big\vert{\mathbb E}\{{\bf h}_k^H\hat{\bf h}_k\}\big\vert^2\nonumber\\
=\ &\frac{1}{\mathbb E\{\Vert \hat{\bf h}_k \Vert^2\}}\big\vert{\mathbb E}\{\hat{\bf h}_k^H\hat{\bf h}_k\}+{\mathbb E}\{{\bf e}_k^H\hat{\bf h}_k\}\big\vert^2\nonumber\\
=\ & \tau_u\rho{\rm tr}({\bf R}_k{\bf \Psi}_k^{-1}{\bf R}_k),
\end{align}
where $\mathbb E\{\Vert \hat{\bf h}_k \Vert^2\}=\tau_u\rho{\rm tr}({\bf R}_k{\bf \Psi}_k^{-1}{\bf R}_k)$ as well as the independence of $\hat{\bf h}_k$ and the channel estimation error are exploited.

2) The power of the interference in (25) is computed as
\begin{align}
{\mathbb E}\{|{\bf h}_k^H{\bf w}_i|^2\}=\ &\frac{{\mathbb E}\{|{\bf h}_k^H\hat{\bf h}_i|^2\}}{\tau_u\rho{\rm tr}({\bf R}_i{\bf \Psi}_i^{-1}{\bf R}_i)}\nonumber\\
=\ &\frac{{\mathbb E}\{|\hat{\bf h}_k^H\hat{\bf h}_i|^2\}+{\mathbb E}\{|{\bf e}_k^H\hat{\bf h}_i|^2\}}{\tau_u\rho{\rm tr}({\bf R}_i{\bf \Psi}_i^{-1}{\bf R}_i)}\nonumber\\
=\ &\frac{{\rm tr}({\bf R}_k{\bf R}_i{\bf \Psi}_i^{-1}{\bf R}_i)}{{\rm tr}({\bf R}_i{\bf \Psi}_i^{-1}{\bf R}_i)},
\end{align}
due to the independence of ${\bf e}_k$ and $(\hat{\bf h}_k,\hat{\bf h}_i)$ and the zero mean of the channel estimation error.

3) The power of the signal uncertainty is calculated as
\begin{align}
&\mathbb V{\rm ar}\left\{{\bf h}_k^H{\bf w}_k\right\}\nonumber\\
&=\frac{1}{\mathbb E\{\Vert \hat{\bf h}_k \Vert^2\}}\bigg(\underbrace{\mathbb E\big\{\vert{\bf h}_k^H\hat{\bf h}_k\vert^2\big\}}_{\mathcal I_1}-\underbrace{\big\vert \mathbb E\{{\bf h}_k^H\hat{\bf h}_k\} \big\vert^2}_{\mathcal I_2}\bigg),
\end{align}
where the expectation $\mathcal I_1$ is derived as
\begin{align}
&\mathcal I_1= \mathbb E\left\{\vert\hat{\bf h}_k^H\hat{\bf h}_k\vert^2\right\}+\mathbb E\left\{\vert{\bf e}_k^H\hat{\bf h}_k\vert^2\right\}\nonumber\\
&= \rho^2\mathbb E\left\{\vert{\bf y}_{p,k}^H{\bf\Psi}_k^{-1}{\bf R}_k{\bf R}_k{\bf\Psi}_k^{-1}{\bf y}_{p,k}\vert^2\right\}+{\rm tr}\left(\mathbb E\{{\bf e}_k{\bf e}_k^H\hat{\bf h}_k\hat{\bf h}_k^H\}\right) \nonumber\\
&= \tau_u^2\rho^2\left\vert {\rm tr}({\bf R}_k{\bf \Psi}_k^{-1}{\bf R}_k) \right\vert^2+\tau_u\rho{\rm tr}\left({\bf C}_k{\bf R}_k{\bf \Psi}_k^{-1}{\bf R}_k\right)
\end{align}
and the expectation $\mathcal I_2$ equals
\begin{align}
	\mathcal I_2=\ &\rho^2\left\vert \mathbb E\{{\bf y}_{p,k}^H{\bf\Psi}_k^{-1}{\bf R}_k{\bf R}_k{\bf\Psi}_k^{-1}{\bf y}_{p,k}\} \right\vert^2\nonumber\\
	=\ &\tau_u^2\rho^2\left\vert {\rm tr}({\bf R}_k{\bf \Psi}_k^{-1}{\bf R}_k) \right\vert^2.
\end{align}
Combining (62), (63), and (64), we obtain the variance as
\begin{align}
	\mathbb V{\rm ar}\left\{{\bf h}_k^H{\bf w}_k\right\}=\frac{{\rm tr}\left({\bf C}_k{\bf R}_k{\bf \Psi}_k^{-1}{\bf R}_k\right)}{{\rm tr}\left({\bf R}_k{\bf \Psi}_k^{-1}{\bf R}_k\right)}.
\end{align}

4) The power of AN leakage that the legitimate user~$k$ has received is computed~as
\begin{align}
{\mathbb E}\left\{{\bf h}_k^H{\bf V}{\bf V}^H{\bf h}_k\right\}=\ &{\mathbb E}\left\{\hat{\bf h}_k^H{\bf V}{\bf V}^H\hat{\bf h}_k\right\}+{\mathbb E}\left\{{\bf e}_k^H{\bf V}{\bf V}^H{\bf e}_k\right\}\nonumber\\
=\ &\sum\nolimits_{i=1}^{M-K}{\mathbb E}\left\{{\rm tr}\left({\bf v}_i^H{\bf e}_k{\bf e}_k^H{\bf v}_i\right)\right\}\nonumber\\
=\ &\frac{M-K}{M}{\rm tr}\left({\bf C}_k\right),
\end{align}
where we have leveraged the feature for the null-space AN method that $\hat{\bf h}_k^H{\bf V}={\bf 0}$ holds and the independence of the channel estimation error and the AN precoding matrix.

5) The power of the transceiver HWI is computed as follows
\begin{align}
{\mathbb E}\left\{{\bf h}_k^H{\bf\Upsilon}_t^{\rm BS}{\bf h}_k\right\}=\ &\kappa_t^{\rm BS}\frac{P_t}{M}{\rm tr}({\bf R}_k),\\
\mathbb E\{\mu_{r,k}^2\}=\ &\kappa_r^{\rm UE}\frac{P_t}{M}{\rm tr}({\bf R}_k).
\end{align}
By substituting (60), (61), and (65)-(68) into (24), the proof is completed.

\section*{Appendix C}
By means of the Jensen's inequality, the ergodic capacity of Eve can be computed by
\begin{align}
	C&\le\log_2\left(1+\mathbb{E}\{\gamma_{\rm E}\}\right).
\end{align}
It is obvious that the independence between ${\bf f}_{k,{\rm E}}$ and ${\bf H}_{\rm E}^H\left(q{\bf V}{\bf V}^H+{\bf \Upsilon}_t^{\rm BS}\right){\bf H}_{\rm E} \triangleq {\bf X}$ is essential towards further simplifying $\gamma_{\rm E}$. We note that ${\bf X}$ is undoubtedly independent of ${\bf f}_{k,{\rm E}}$ since it converges to a deterministic matrix when $M \rightarrow\infty$. Consequently, $\mathbb E\left\{({\bf f}_{k,{\rm E}})^H {\bf X}^{-1}{\bf f}_{k,{\rm E}}\right\}$ can be reformulated as $\mathbb E\left\{({\bf f}_{k,{\rm E}})^H \mathbb E\left\{{\bf X}^{-1}\right\}{\bf f}_{k,{\rm E}}\right\}$, where $\mathbb E\left\{{\bf X}^{-1}\right\}$ is roughly characterized as a scaled identity matrix. Based on (2), the distribution of ${\bf H}_{\rm E}$ is still Gaussian with covariance matrix ${\bf Q}_{\rm E}\triangleq{\bf R}_{\rm B,E}+{\bf R}_{\rm E}$, where ${\bf R}_{\rm E}={\bf H}_1{\bf \Phi}\widetilde{\bf R}_{\rm I, E}{\bf \Phi}^H{\bf H}_1^H$ with $\widetilde{\bf R}_{\rm I, E}=\varrho^2{\bf R}_{\rm I,E}+\beta_{\rm I,E}(1-\varrho^2){\bf I}_N$.

Then, we evince that the matrix ${\bf \Upsilon}_t^{\rm BS}$ converges to a deterministic scaled identity matrix $\kappa_t^{\rm BS}\frac{P_t}{M}{\bf I}_M$ due to the strong law of large numbers. Hence, ${\bf X}$ is expressed as
\begin{align}
	{\bf X}
	=\ &{\bf H}_{\rm E}^H\left(q{\bf V}{\bf V}^H+\kappa_t^{\rm BS}\frac{P_t}{M}{\bf I}_M\right){\bf H}_{\rm E}\nonumber\\
	=\ &q{\bf H}_{\rm E}^H{\bf V}{\bf V}^H{\bf H}_{\rm E}+\kappa_t^{\rm BS}\frac{P_t}{M}{\bf H}_{\rm E}^H[{\bf V}\ {\bf V}_0][{\bf V}\ {\bf V}_0]^H{\bf H}_{\rm E}\nonumber\\
	=\ &\left(q+\kappa_t^{\rm BS}\frac{P_t}{M}\right){\bf H}_{\rm E}^H{\bf V}{\bf V}^H{\bf H}_{\rm E}+\kappa_t^{\rm BS}\frac{P_t}{M}{\bf H}_{\rm E}^H{\bf V}_0{\bf V}_0^H{\bf H}_{\rm E},
\end{align}
where we have used $[{\bf V}\ {\bf V}_0][{\bf V}\ {\bf V}_0]^H={\bf I}_M$. The accurate distribution of ${\bf X}$ is hard to derive, and ${\bf X}$ is not, strictly speaking, a Wishart matrix [13], [44]. Nevertheless, by utilizing the moment matching method,  the distribution of ${\bf X}$ can be approximately represented as one scaled Wishart matrix, i.e., ${\bf X}\sim \varphi \mathcal{W}_{M_{\rm E}}(\eta, {\bf I}_{M_{\rm E}})$, in which the parameters $\varphi$ and $\eta$ are selected such that both sides in (70) have the identical first two moments [45]. To achieve this, eigendecompose ${\bf Q}_{\rm E}={\bf U} {\bf\Delta} {\bf U}^H$ to decorrelate the aggregate channel matrix ${\bf H}_{\rm E}$ as ${\bf \Omega}={\bf H}_{\rm E}{\bf\Delta}^{-1/2}{\bf U}^H$ with ${\bf\Delta}={\rm diag}(\lambda_1, \dots ,\lambda_m, \dots, \lambda_M)$ holding the eigenvalues of ${\bf Q}_{\rm E}$ and the columns of $\bf U$ being the associated eigenvectors. It is obvious that the distributions of $\bf \Omega U$ and $\bf \Omega$ are the same, given that $\bf U$ is unitary. Consequently, the distributions of ${\bf H}_{\rm E}^H{\bf V}{\bf V}^H{\bf H}_{\rm E}$ and ${\bf H}_{\rm E}^H{\bf V}_0{\bf V}_0^H{\bf H}_{\rm E}$ are equal to
\begin{equation}
		\sum_{i=1}^{M}\sum_{j=1}^{M}\lambda_i^{\frac{1}{2}}\lambda_j^{\frac{1}{2}}{\boldsymbol \omega}_i^H{\bf v}_{i}{\bf v}_{j}^H{\boldsymbol \omega}_j
\end{equation}
and
\begin{equation}
		\sum_{i=1}^{M}\sum_{j=1}^{M}\lambda_i^{\frac{1}{2}}\lambda_j^{\frac{1}{2}}{\boldsymbol \omega}_i^H{\bf v}_{0,i}{\bf v}_{0,j}^H{\boldsymbol \omega}_j,
\end{equation}
where ${\boldsymbol \omega}_i$ is the $i$-th row of ${\bf \Omega}$, and ${\bf v}_i$ and ${\bf v}_{0,i}$ are the $i$-th rows of ${\bf V}$ and ${\bf V}_0$, respectively.
Based on the independence of ${\boldsymbol \omega}_i$ and ${\boldsymbol \omega}_j$, we obtain that ${\bf H}_{\rm E}^H{\bf V}{\bf V}^H{\bf H}_{\rm E}$ follows a Wishart distribution $\sum_{l=1}^{M}\lambda_l\mathcal{W}_{M_{\rm E}}\left(M-K, \frac{1}{M}{\bf I}_{M_{\rm E}}\right)$ and ${\bf H}_{\rm E}^H{\bf V}_0{\bf V}_0^H{\bf H}_{\rm E}$ also follows $\sum_{l=1}^{M}\lambda_l\mathcal{W}_{M_{\rm E}}\left(K, \frac{1}{M}{\bf I}_{M_{\rm E}}\right)$ [46]. Therefore, equating the first and the second moments of those matrices leads to
\begin{align}
	\eta\varphi=\frac{{\rm tr}({\bf Q}_{\rm E})}{M}\left[\left(q+\kappa_t^{\rm BS}\frac{P_t}{M}\right)(M-K)+K\kappa_t^{\rm BS}\frac{P_t}{M}\right]
\end{align}
and
\begin{align}
\eta\varphi^2=\frac{{\rm tr}({\bf Q}_{\rm E}^2)}{M}\left[\left(q+\kappa_t^{\rm BS}\frac{P_t}{M}\right)^2(M-K)+K\left(\kappa_t^{\rm BS}\frac{P_t}{M}\right)^2 \right],
\end{align}
where we have used the properties that $\sum_{m=1}^{M}\lambda_m={\rm tr}({\bf Q}_{\rm E})$ and $\sum_{m=1}^{M}\lambda_m^2={\rm tr}({\bf Q}_{\rm E}^2)$. Then, the parameters $\varphi$ and $\eta$ are computed as
\begin{align}
	\varphi=\frac{{\rm tr}\left({\bf Q}_{\rm E}^2\right)[q^2(M-K)+2q\kappa_t^{\rm BS}P_t\frac{M-K}{M}+(\kappa_t^{\rm BS}P_t)^2/M]}{{\rm tr}({\bf Q}_{\rm E})[q(M-K)+\kappa_t^{\rm BS}P_t]}
\end{align}
and
\begin{align}
	\eta=\frac{\left[{\rm tr}({\bf Q}_{\rm E})\right]^2[q(M-K)+\kappa_t^{\rm BS}P_t]^2/M}{{\rm tr}\left({\bf Q}_{\rm E}^2\right)[q^2(M-K)+2q\kappa_t^{\rm BS}P_t\frac{M-K}{M}+(\kappa_t^{\rm BS}P_t)^2/M]}.
\end{align}

Subsequently, $\mathbb{E}\{\gamma_{\rm E}\}$ is given by
\begin{align}
	p\mathbb E\left\{({\bf f}_{k,{\rm E}})^H {\bf X}^{-1}{\bf f}_{k,{\rm E}}\right\}\ &\overset{\rm (a)}{=}\frac{p}{\varphi(\eta-M_{\rm E})}\mathbb E\left\{({\bf f}_{k,{\rm E}})^H{\bf f}_{k,{\rm E}}\right\}
	\nonumber\\
	&\overset{\rm (b)}{=}\frac{p M_{\rm E}{\rm tr}\left({\bf R}_k{\bf \Psi}_k^{-1}{\bf R}_k{\bf Q}_{\rm E}\right)}{\varphi(\eta-M_{\rm E}){\rm tr}({\bf R}_k{\bf \Psi}_k^{-1}{\bf R}_k)},
\end{align}
where $\rm (a)$ uses the property of obeying the distribution ${\bf \Xi}\sim\mathcal{W}_b(a,{\bf I}_b)$, i.e., ${\bf \Xi}^{-1}\xrightarrow{{\rm a.s.}}1/(a-b){\bf I}_b$ with $a>b$ [46, Sec. 2.1.6], and $\rm (b)$ results from
\begin{align}
\mathbb E\left\{({\bf f}_{k,{\rm E}})^H{\bf f}_{k,{\rm E}}\right\}=\ &\mathbb{E}\big\{{\bf w}_k^H{\bf H}_{\rm E}{\bf H}_{\rm E}^H{\bf w}_k\big\}\nonumber\\
=\ &M_{\rm E}\mathbb{E}\big\{{\bf w}_k^H{\bf Q}_{\rm E}{\bf w}_k\big\}\nonumber\\
=\ &\frac{M_{\rm E}{\rm tr}\left({\bf R}_k{\bf \Psi}_k^{-1}{\bf R}_k{\bf Q}_{\rm E}\right)}{{\rm tr}({\bf R}_k{\bf \Psi}_k^{-1}{\bf R}_k)}.
\end{align}
This finally yields the upper bound in (29) by substituting (75), (76), and (78) into (77).

\end{document}